\newcommand{\qw}[1][-1]{\ar @{-} [0,#1]}
\newcommand{\cw}[1][-1]{\ar @{=} [0,#1]}
\newcommand{\measureD}[1]{*{\xy*+=+<.5em>{\vphantom{\rule{0em}{.1em}#1}}*\cir{r_l};p\save*!R{#1} \restore\save+UC;+UC-<.5em,0em>*!R{\hphantom{#1}}+L **\dir{-} \restore\save+DC;+DC-<.5em,0em>*!R{\hphantom{#1}}+L **\dir{-} \restore\POS+UC-<.5em,0em>*!R{\hphantom{#1}}+L;+DC-<.5em,0em>*!R{\hphantom{#1}}+L **\dir{-} \endxy} \qw}
\newcommand{\multimeasureD}[2]{*+<1em,.9em>{\hphantom{#2}}\save[0,0].[#1,0];p\save !C *{#2},p+LU+<0em,0em>;+RU+<-.8em,0em> **\dir{-}\restore\save +LD;+LU **\dir{-}\restore\save +LD;+RD-<.8em,0em> **\dir{-} \restore\save +RD+<0em,.8em>;+RU-<0em,.8em> **\dir{-} \restore \POS !UR*!UR{\cir<.9em>{r_d}};!DR*!DR{\cir<.9em>{d_l}}\restore \qw}
\newcommand{\cghost}[1]{*+<1em,.9em>{\hphantom{#1}} \cw}
\newcommand{\nghost}[1]{*+<1em,.9em>{\hphantom{#1}}}
\newcommand{\rstick}[1]{*!L!<-.5em,0em>=<0em>{#1}}
\newcommand{\lstick}[1]{*!R!<.5em,0em>=<0em>{#1}}
\newcommand{\Qcircuit}[1][0em]{\xymatrix @*=<#1>}
\newcommand{\prepareC}[1]{*{\xy*+=+<.5em>{\vphantom{#1\rule{0em}{.1em}}}*\cir{l^r};p\save*!L{#1} \restore\save+UC;+UC+<.5em,0em>*!L{\hphantom{#1}}+R **\dir{-} \restore\save+DC;+DC+<.5em,0em>*!L{\hphantom{#1}}+R **\dir{-} \restore\POS+UC+<.5em,0em>*!L{\hphantom{#1}}+R;+DC+<.5em,0em>*!L{\hphantom{#1}}+R **\dir{-} \endxy}}
\newcommand{\qc}[1]{\begin{aligned} \Qcircuit  @C=10pt @R=4pt
    {#1} \end{aligned}}
\definecolor{bluegray}{RGB}{230,230,255}
\definecolor{paleyellow}{RGB}{255,255,204}
\tikzstyle{block}=[draw=black, rectangle, align=center, fill=bluegray, drop shadow]
\tikzstyle{rblock}=[draw=black, shape=rectangle,rounded corners=1em,align=center,fill=paleyellow, drop shadow]
\DeclareMathOperator{\Tr}{Tr}
\DeclareMathOperator{\spn}{span}
\DeclareMathOperator{\conv}{conv}
\DeclareMathOperator{\rng}{rng}
\newcommand{\bb}[1]{\mathbb{#1}}
\newcommand{\R}{\bb{R}}
\newtheorem{thm}{Theorem}
\newtheorem{lmm}{Lemma}
\newtheorem{cor}{Corollary}
\theoremstyle{definition}
\newtheorem{dfn}{Definition}
\newtheorem*{example*}{Example}
\begin{document}

\title{Extension  of  the Alberti-Uhlmann  criterion  beyond
  qubit dichotomies}

\author{Michele Dall'Arno}

\email{michele.dallarno@aoni.waseda.jp}

\affiliation{Centre   for  Quantum   Technologies,  National
  University  of  Singapore,  3  Science  Drive  2,  117543,
  Singapore}

\affiliation{Faculty  of Education  and Integrated  Arts and
  Sciences,    Waseda    University,   1-6-1    Nishiwaseda,
  Shinjuku-ku, Tokyo 169-8050, Japan}

\orcid{0000-0001-7442-4832}

\author{Francesco Buscemi}

\email{buscemi@i.nagoya-u.ac.jp}

\affiliation{Graduate  School  of  Informatics,  Nagoya
  University, Chikusa-ku, 464-8601 Nagoya, Japan}

\orcid{0000-0001-9741-0628}

\author{Valerio Scarani}

\email{physv@nus.edu.sg}

\affiliation{Centre   for  Quantum   Technologies,  National
  University  of  Singapore,  3  Science  Drive  2,  117543,
  Singapore}

\affiliation{Department of  Physics, National  University of
  Singapore, 2 Science Drive 3, 117542, Singapore}

\orcid{0000-0001-5594-5616}

\maketitle

\begin{abstract}
  The Alberti-Uhlmann criterion states  that any given qubit
  dichotomy can  be transformed  into any other  given qubit
  dichotomy by a quantum channel  if and only if the testing
  region of the former dichotomy includes the testing region
  of  the   latter  dichotomy.   Here,  we   generalize  the
  Alberti-Uhlmann criterion to the  case of arbitrary number
  of qubit  or qutrit states.   We also derive  an analogous
  result for the  case of qubit or  qutrit measurements with
  arbitrary   number  of   elements.   We   demonstrate  the
  possibility  of applying  our criterion  in a  semi-device
  independent way.
\end{abstract}

\section{Introduction}

When  quantum  states are  looked  at  as resources,  it  is
natural to study which states  can be transformed into which
others  by means  of  an allowed  set  of operations.   This
question  has  been  rephrased in  many  ways:  entanglement
processing, thermal operations... In this paper, we consider
generalizations  of  the following  task:  given  a pair  of
quantum    states     $(\rho_0,    \rho_1)$,     called    a
\textit{dichotomy},   determine   which  other   dichotomies
$(\sigma_0,   \sigma_1)$  can   be  obtained   from  it   by
application of a completely positive trace preserving (CPTP)
map. The  simplicity of the  problem is only  apparent: very
few results are known  about this problem.  Before reviewing
them, and stating our contribution,  let us take a detour to
consider the analogous task in classical statistics.

A classical dichotomy is a pair of probability distributions
$(p_0,p_1)$.    It  appears   naturally   in  the   simplest
formulation of  hypothesis testing,  in which there  are two
inputs  (the  \textit{null}   and  the  \textit{alternative}
hypotheses)    and   two    outputs   (\textit{accept}    or
\textit{reject}).  In this case,  any test is represented by
a  point  in   the  dichotomy's  \textit{hypothesis  testing
  region},        defined        as        the        region
$\{(p_0,p_1)\}\subset\mathbb{R}^2$   where   $p_0$  is   the
probability of  correctly accepting the null  hypothesis and
$p_1$  is  the probability  of  wrongly  accepting the  null
hypothesis with  the given  test~\cite{Ren16}. Tests  can be
then designed, for instance, to maximize $p_0$ while keeping
$p_1$ under  a certain  threshold. In particular,  the wider
the testing region, the more ``testable'', that is, the more
``distinguishable'' the pair of hypotheses is.

That the  testing region  is all  that matters  when dealing
with pairs of  hypotheses is made particularly  clear by the
celebrated Blackwell's theorem for dichotomies~\cite{Bla53}:
given two dichotomies $(p_0,p_1)$ and $(q_0, q_1)$, possibly
on  different  sample  spaces,  there  exists  a  stochastic
transformation that  transforms $p_0$  into $q_0$  and $p_1$
into $q_1$  simultaneously (``statistical  sufficiency'') if
and only if the testing  region for $(p_0,p_1)$ contains the
testing region for $(q_0, q_1)$.  In other words, the former
dichotomy can  be deterministically processed into  (or, can
deterministically simulate) the latter.  In the special case
in  which $p_1  = q_1  = u$,  the uniform  distribution, the
ordering induced  by comparing the testing  region coincides
with the ubiquitous  \textit{majorization ordering}: indeed,
the Lorenz curve corresponding to a probability distribution
$p$  is  nothing but  the  boundary  of the  testing  region
corresponding to the  dichotomy $(p, u)$~\cite{Bla53, Tor70,
  Tor91, Ren16}.

Such a compact characterization is  not known in the quantum
case  that concerns  us~\cite{RKW11, Jen12,  Mat14a, Mat14b,
  BG17}:  quantum  statistical  sufficiency  is  in  general
expressed   in    terms   of    an   infinite    number   of
conditions~\cite{Bus12, Mat10a, Jen16}  that are, therefore,
very  difficult  to  check in  practice~\cite{MOA11}.   Some
results,   based  on~\cite{Mat10b},   are  known   when  the
conversion is relaxed  to be approximate~\cite{BST19, WW19},
but  the  problem  remains   hard  in  general.   A  notable
exception  is the  case in  which both  quantum dichotomies,
$(\rho_0, \rho_1)$ and $(\sigma_0, \sigma_1)$, only comprise
two-dimensional   (i.e.,   qubit)   states.   Then,   as   a
consequence of  a well-known result by  Alberti and Uhlmann,
there exists a CPTP map transforming $(\rho_0, \rho_1)$ into
$(\sigma_0, \sigma_1)$ if and only  if the testing region of
the   former   contains   the    testing   region   of   the
latter~\cite{AU80}.    This  is   the   perfect  analog   of
Blackwell's theorem;  but counterexamples are known  as soon
as $(\rho_0,\rho_1)$ is a qutrit dichotomy~\cite{Mat14a}.

In this paper,  building upon previous works of  some of the
present authors~\cite{DBBV17, Dal19,  DBBT18, BD18, DHBS19},
we derive  the following  results. First,  we show  that any
family   of  $n$   qubit   states  which   can  all   become
simultaneously  real under  a single  unitary transformation
can be transformed  into any other family of  $n$ qubit (or,
under some conditions,  qutrit) states by a CPTP  map if the
testing region of the former  includes the testing region of
the latter (the Alberti-Uhlmann case is recovered for $n=2$,
since any  pair of qubit  states can be  made simultaneously
real). Second,  we show that  an analogous result  holds for
qubit or qutrit measurements with $n$ elements which can all
become   simultaneously   real   under  a   single   unitary
transformation. Our results follow  as a natural consequence
of the Woronowicz decomposition~\cite{Wor76} of linear maps,
once  families of  states and  measurements are  regarded as
linear transformations.   We demonstrate the  possibility of
witnessing   statistical   sufficiency  in   a   semi-device
independent  way, that  is,  without any  assumption on  the
devices except their Hilbert space dimension.

The     paper    is     structured    as     follows.     In
Section~\ref{sec:results}  we present  our main  results. We
first  introduce  our   extensions  of  the  Alberti-Uhlmann
criterion,    first    for    families   of    states    (in
Section~\ref{sec:simulability_states})    and    then    for
measurements                                             (in
Section~\ref{sec:simulability_measurement}).      We    then
discuss  semi-device  independent  applications,  first  for
families              of             states              (in
Section~\ref{sec:sdi_simulability_states}),  and   then  for
measurements                                             (in
Section~\ref{sec:sdi_simulability_measurement}).          In
Section~\ref{sec:proofs},  we provide  technical proofs  for
our            results.            In            particular,
Sections~\ref{sec:proofs_states}
and~\ref{sec:proof_measurement}   prove   the   results   of
Sections~\ref{sec:simulability_states}
and~\ref{sec:simulability_measurement},   respectively.   We
conclude     by      summarizing     our      results     in
Section~\ref{sec:conclusion}.

\section{Main results}
\label{sec:results}

We  will  make  use   of  standard  definitions  in  quantum
information   theory~\cite{Wil17}.   A   quantum  state   is
represented  by  a  density  matrix,  that  is,  a  positive
semi-definite operator $\rho$ such  that $\Tr[\rho] = 1$.  A
quantum   measurement   is   represented   by   a   positive
operator-valued measure, that is, a  family $\{ \pi_a \}$ of
positive   semi-definite   operators    that   satisfy   the
completeness  condition  $\sum_a  \pi_a =  \openone$,  where
$\openone$ denotes the identity operator.

A  channel is  represented  by a  completely positive  trace
preserving map, that  is, a map $\mathcal{C}$  such that for
any   state  $\rho$   one   has  $\Tr[\mathcal{C}(\rho)]   =
\Tr[\rho]$ and $(\mathcal{I} \otimes \mathcal{C}) (\rho) \ge
0$.  In the Heisenberg picture,  a channel is represented by
a completely  positive unit preserving  map, that is,  a map
$\mathcal{C}$  such  that  for  any  $\pi  \ge  0$  one  has
$\mathcal{C}(\openone) = \openone$ and $(\mathcal{I} \otimes
\mathcal{C}) (\pi) \ge 0$.

\subsection{Simulability of families of states}
\label{sec:simulability_states}

We say that a family $\{  \rho_x \}$ of $m$ states simulates
another (possibly different dimensional) family $\{ \sigma_x
\}$ of $m$ states, in formula
\begin{align}
  \label{eq:states_sim}
  \{ \sigma_x \} \preceq \{ \rho_x \},
\end{align}
if and  only if  there exists  a channel  $\mathcal{C}$ such
that $\sigma_x = \mathcal{C} (\rho_x)$ for any $x$.

If    condition~\eqref{eq:states_sim}   is    verified,   it
immediately follows that
\begin{align}
  \label{eq:states_rng_incl}
  \mathcal{R}  \left(  \left\{   \sigma_x  \right\}  \right)
  \subseteq  \mathcal{R}  \left(   \left\{  \rho_x  \right\}
  \right),
\end{align}
where  $\mathcal{R}( \{  \rho_x  \} )$  denotes the  testing
region of family  $\{ \rho_x \}$, defined as the  set of all
vectors whose  $x$-th entry is the  probability $\Tr [\rho_x
  \pi]$ for any measurement element $\pi$, in formula
\begin{align*}
  & \mathcal{R} \left( \left\{ \rho_x \right\} \right) \\ :=
  & \left\{ \mathbf{q} \; \Big| \;  \exists \; 0 \le \pi \le
  \openone  \textrm{  s.t.   }  \mathbf{q}_x  =  \Tr  \left[
    \rho_x \pi \right]\; \forall x \right\}.
\end{align*}
In  other words,  for any  measurement element  $\tau$ there
exists  a measurement  element $\pi$  such that  $\Tr[\rho_x
  \pi] = \Tr[\sigma_x \tau]$ for any $x$.

Here,  we   derive  conditions   under  which   the  reverse
implication      is       also      true,       that      is
Eq.~\eqref{eq:states_rng_incl}                       implies
Eq.~\eqref{eq:states_sim}:
\begin{thm}
  \label{thm:simulability_states}
  For any  family $\{ \sigma_x  \}$ of qubit states  and any
  real  family $\{  \rho_x  \}$ of  qubit  states (that  is,
  states that  have only  real entries  in some  basis), the
  following are equivalent:
  \begin{itemize}
    \item $\{ \sigma_x \} \preceq \{ \rho_x \}$.
    \item  $\mathcal{R}   (  \{  \sigma_x  \}   )  \subseteq
      \mathcal{R} ( \{ \rho_x \} )$.
  \end{itemize}
  If  $\{   \rho_x  \}$   contains  the   identity  operator
  $\openone$ in its linear span, the statement holds even if
  $\{ \sigma_x \}$ is a family of qutrit states.
\end{thm}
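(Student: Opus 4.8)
The forward implication is the easy one and is essentially already recorded in the text: if $\mathcal{C}$ is a channel with $\mathcal{C}(\rho_x)=\sigma_x$, then its Heisenberg adjoint $\mathcal{C}^*$ is completely positive and unital, hence sends effects to effects, so $\mathcal{R}(\{\sigma_x\})\subseteq\mathcal{R}(\{\rho_x\})$. The plan is to prove the converse by turning the inclusion of testing regions into a statement about linear maps and then invoking the Woronowicz decomposition. First I would regard the family $\{\rho_x\}$ as the linear map $\pi\mapsto(\Tr[\rho_x\pi])_x$ defined on effects, whose range is exactly $\mathcal{R}(\{\rho_x\})$, and check that $\rho_x\mapsto\sigma_x$ is consistent: any linear relation $\sum_x c_x\rho_x=0$ confines $\mathcal{R}(\{\rho_x\})$, and hence $\mathcal{R}(\{\sigma_x\})$, to the hyperplane $\sum_x c_x\mathbf{q}_x=0$, forcing $\Tr[(\sum_x c_x\sigma_x)\pi]=0$ for every effect $\pi$ and therefore $\sum_x c_x\sigma_x=0$. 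Hence there is a well-defined linear map $\Phi$ on $V:=\spn\{\rho_x\}$ with $\Phi(\rho_x)=\sigma_x$.

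Next I would extend $\Phi$ to a \emph{positive} map $\mathcal{C}$ from the qubit algebra to the qubit (resp.\ qutrit) algebra, still satisfying $\mathcal{C}(\rho_x)=\sigma_x$; such an extension is automatically trace-preserving on $V$ because $\Tr[\mathcal{C}(\rho_x)]=\Tr[\sigma_x]=1=\Tr[\rho_x]$. Existence of the positive extension is a Hahn--Banach/closed-cone statement, and the one inequality to verify is that $\Phi$ sends positive elements of $V$ to positive operators: if $A=\sum_x a_x\rho_x\ge0$ and $\tau$ is any effect on the target, the inclusion of testing regions supplies an effect $\pi$ with $\Tr[\rho_x\pi]=\Tr[\sigma_x\tau]$ for all $x$, whence $\Tr[\Phi(A)\tau]=\sum_x a_x\Tr[\sigma_x\tau]=\Tr[A\pi]\ge0$. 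Now comes the conceptual core. Since the domain is the qubit algebra and the codomain is the qubit or qutrit algebra, every positive linear map with this domain and codomain is decomposable~\cite{Wor76} (St{\o}rmer's theorem in the qubit-to-qubit case), so I may write $\mathcal{C}=\mathcal{C}_1+\mathcal{C}_2\circ\mathrm{T}$ with $\mathcal{C}_1,\mathcal{C}_2$ completely positive and $\mathrm{T}$ the transpose; this decomposability is exactly what fails once the source is a qutrit, matching the counterexamples cited above. The reality hypothesis now does the rest: each $\rho_x$ is symmetric in the distinguished basis, so $\mathrm{T}(\rho_x)=\rho_x$ and $\sigma_x=\mathcal{C}(\rho_x)=(\mathcal{C}_1+\mathcal{C}_2)(\rho_x)$ for every $x$, so that $\widetilde{\mathcal{C}}:=\mathcal{C}_1+\mathcal{C}_2$ is \emph{completely} positive and still obeys $\widetilde{\mathcal{C}}(\rho_x)=\sigma_x$.

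It remains to upgrade $\widetilde{\mathcal{C}}$ from trace-preserving-on-$V$ to genuinely trace-preserving without changing its action on the $\rho_x$; equivalently $\widetilde{\mathcal{C}}^*(\openone)-\openone\in V^\perp$ and I must remove this discrepancy. When $\widetilde{\mathcal{C}}^*(\openone)\le\openone$ it is routine: adding the completely positive map $A\mapsto\Tr[A(\openone-\widetilde{\mathcal{C}}^*(\openone))]\,\omega$, for any fixed target state $\omega$, leaves the $\rho_x$ untouched (the correction operator lies in $V^\perp$) and makes the Heisenberg image of the identity equal $\openone$. In general I would first use the freedom in the decomposition — replacing $(\mathcal{C}_1,\mathcal{C}_2)$ by $(\mathcal{C}_1+\mathcal{F},\,\mathcal{C}_2-\mathcal{F}\circ\mathrm{T})$ for a suitable $\mathcal{F}$ that is simultaneously completely positive and completely co-positive — to enforce the required operator inequality, and it is here, for a qutrit codomain, that I expect the hypothesis $\openone\in\spn\{\rho_x\}$ to be used (it forces $\widetilde{\mathcal{C}}^*(\openone)$ to have trace $2$, hence to be twice a state, which limits the correction that is needed). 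This trace-preservation bookkeeping, rather than the passage through decomposability and reality, is the step I expect to be the main obstacle.
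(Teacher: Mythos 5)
Your conceptual core --- decomposability of positive maps from $M_2$ into $M_2$ or $M_3$ \`a la Woronowicz/St{\o}rmer, with the reality of $\{\rho_x\}$ absorbing the transpose part --- is exactly the engine of the paper's proof, and your verification that $\Phi$ is positive on $V=\spn\{\rho_x\}$ is correct. The first genuine gap is the extension of $\Phi$ from $V$ to a positive map on the whole qubit algebra. This is not ``a Hahn--Banach/closed-cone statement'': Riesz--Kantorovich-type extension theorems for positive maps require the target cone to be a lattice cone, which the positive semidefinite cone of $M_d$ ($d\ge2$) is not, so no off-the-shelf extension theorem applies to operator-valued maps. The paper instead builds the map explicitly as $C=S_1^+S_0$ (Moore--Penrose pseudoinverse, in the Heisenberg picture), which amounts to precomposing $\Phi$ with the orthogonal projection onto $V$; positivity of that projection is a geometric fact --- the qubit positive cone is a circular (Lorentz) cone with axis $\openone$, and orthogonal projection onto a subspace containing the axis preserves such a cone --- and this is precisely where the hypothesis $\openone\in\spn\{\rho_x\}$ enters, not in the trace bookkeeping as you guessed. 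When $\openone\notin\spn\{\rho_x\}$ the projection is no longer positive and the paper abandons the Woronowicz route entirely: a real qubit family whose linear span misses $\openone$ lies on a line in state space, so each $\rho_x$ is an affine combination of $\rho_0,\rho_1$, the same dependency is inherited by the $\sigma_x$, and the claim reduces to the original Alberti--Uhlmann dichotomy criterion (via Buscemi--Gour). Your architecture has no mechanism for this case.

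The second gap is the one you flag yourself: trace preservation. In the paper it never arises, because the pseudoinverse construction is automatically unital in the Heisenberg picture (again thanks to $\openone\in\spn\{\rho_x\}$), Woronowicz decomposes a unital positive map into \emph{unital} completely positive pieces, and in the remaining case Alberti--Uhlmann directly supplies a channel. Your proposed repair --- shuffling a simultaneously CP and co-CP $\mathcal{F}$ between the two summands to force $\widetilde{\mathcal{C}}^*(\openone)\le\openone$ --- is only a sketch, and you give no argument that such an $\mathcal{F}$ exists. So while the decomposability-plus-reality idea is the right one, both the positive extension and the normalization need the paper's explicit constructions (or equivalents) before the proof closes.
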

The proof is given in Section \ref{sec:proofs_states}.

Notice that the assumption that the family $\{ \rho_x \}$ of
states is real cannot be relaxed.  As a counterexample, take
$\{  \sigma_x  \}$  and  $\{  \rho_x  \}$  to  be  symmetric
informationally complete (or tetrahedral) families of states
with  $\rho_0 =  \sigma_1$  and $\rho_1  = \sigma_0$,  while
$\rho_k = \sigma_k$ for $k =  2, 3$. A family $\{ \rho_x \}$
of four pure qubit states is tetrahedral if and only if $\Tr
[  \rho_x \rho_z  ]$ is  constant for  any $x  \neq z$.   It
immediately follows  that there  exists a  transposition map
$\mathcal{T}$  (with  respect  to   some  basis)  such  that
$\sigma_x = \mathcal{T}  (\rho_x)$ for any $x$.   Due to the
informational  completeness  of  $\{ \sigma_x  \}$  and  $\{
\rho_x \}$, map $\mathcal{T}$ is the only map such that this
is the case. However, map  $\mathcal{T}$ is not a channel as
it is not completely positive.

\subsection{Simulability of measurements}
\label{sec:simulability_measurement}

We  say  that  an  $n$-outcome  measurement  $\{  \pi_a  \}$
simulates   another    (possibly,   different   dimensional)
$n$-outcome measurement $\{ \tau_a \}$, in formula
\begin{align}
  \label{eq:meas_sim}
  \left\{ \tau_a \right\} \preceq \left\{ \pi_a \right\},
\end{align}
if and  only if  there exists  a channel  $\mathcal{C}$ such
that  $\tau_a =  \mathcal{C}^\dagger (\pi_a)$  for any  $a$,
where $\mathcal{C}^\dagger$ denotes channel $\mathcal{C}$ in
the Heisenberg picture.

If  condition~\eqref{eq:meas_sim}  is verified,  it  follows
immediately that
\begin{align}
  \label{eq:meas_rng_incl}
  \mathcal{R}   \left(  \left\{   \tau_a  \right\}   \right)
  \subseteq  \mathcal{R}   \left(  \left\{   \pi_a  \right\}
  \right),
\end{align}
where the range $\mathcal{R}( \{  \pi_a \} )$ of measurement
$\{  \pi_a \}$  is defined  as  the set  of all  probability
distributions $\Tr[\rho \pi_a ]$ on the outcomes $a$ for any
state $\rho$, in formula
\begin{align*}
  & \mathcal{R} \left( \left\{  \pi_a \right\} \right) \\ :=
  & \left\{  \mathbf{p} \; \Big|  \; \exists \; \rho  \ge 0,
  \Tr \rho =  1, \textrm{ s.t.  } \mathbf{p}_a  = \Tr \left[
    \rho \pi_a \right] \; \forall a \right\}.
\end{align*}
In other words, for any  state $\sigma$ there exists a state
$\rho$ such that $\Tr[\rho  \pi_a] = \Tr[\sigma \tau_a]$ for
any $a$.

Similarly to what we did before, we derive
conditions under which the reverse implication is also true,
that      is       Eq.~\eqref{eq:meas_rng_incl}      implies
Eq.~\eqref{eq:meas_sim}:
\begin{thm}
  \label{thm:simulability_measurements}
  For any qubit or qutrit measurement $\{ \tau_a \}$ and any
  real qubit measurement  $\{ \pi_a \}$ (that  is, one whose
  elements are  all real in  some basis), the  following are
  equivalent:
  \begin{itemize}
    \item $\{ \tau_a \} \preceq \{ \pi_a \}$.
    \item   $\mathcal{R}  (   \{  \tau_a   \}  )   \subseteq
      \mathcal{R} ( \{ \pi_a \} )$.
  \end{itemize}
\end{thm}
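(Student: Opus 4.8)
The plan is to mirror the proof of Theorem~\ref{thm:simulability_states}, using the Schr\"odinger--Heisenberg duality: a measurement $\{\pi_a\}$ on a space $\mathcal{H}$ simulates a measurement $\{\tau_a\}$ on a space $\mathcal{K}$ exactly when there exists a completely positive \emph{unital} map $\Phi\colon\mathcal{B}(\mathcal{H})\to\mathcal{B}(\mathcal{K})$ with $\Phi(\pi_a)=\tau_a$ for every $a$; then $\mathcal{C}:=\Phi^\dagger$ is completely positive and trace preserving (because $\Phi$ is completely positive and unital) and satisfies $\mathcal{C}^\dagger(\pi_a)=\tau_a$. Since simulability trivially implies the testing-region inclusion (this is already recorded above), only the converse needs work. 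Throughout, $\dim\mathcal{H}=2$ with all the $\pi_a$ real in a fixed basis, and $\dim\mathcal{K}\in\{2,3\}$.

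The first step, which is the heart of the argument, is to build a positive unital $\Phi$ with $\Phi(\pi_a)=\tau_a$. Define $\Phi$ on the real subspace $\spn\{\pi_a\}\subseteq\mathcal{B}(\mathcal{H})$ by $\Phi\big(\sum_a c_a\pi_a\big):=\sum_a c_a\tau_a$. This is well defined, because $\sum_a d_a\pi_a=0$ forces $\sum_a d_a\Tr[\rho\pi_a]=0$ for every state $\rho$ on $\mathcal{H}$, and, by $\mathcal{R}(\{\tau_a\})\subseteq\mathcal{R}(\{\pi_a\})$, every state $\sigma$ on $\mathcal{K}$ admits a state $\rho_\sigma$ on $\mathcal{H}$ with $\Tr[\sigma\tau_a]=\Tr[\rho_\sigma\pi_a]$ for all $a$, whence $\sum_a d_a\Tr[\sigma\tau_a]=0$ for all $\sigma$ and thus $\sum_a d_a\tau_a=0$. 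The same inclusion makes $\Phi$ positive on this ordered subspace: $\sum_a c_a\pi_a\ge0$ gives $\Tr[\sigma\sum_a c_a\tau_a]=\Tr[\rho_\sigma\sum_a c_a\pi_a]\ge0$ for all $\sigma$. Finally, the completeness relation $\sum_a\pi_a=\openone$ puts the order unit $\openone$ into $\spn\{\pi_a\}$ with $\Phi(\openone)=\sum_a\tau_a=\openone$, so a Hahn--Banach-type extension of positive maps from a subspace containing an order unit (exactly as in the proof of Theorem~\ref{thm:simulability_states}) promotes $\Phi$ to a positive unital map on all of $M_2(\mathbb{C})$. Here the completeness relation takes over the role played by the hypothesis ``$\openone\in\spn\{\rho_x\}$'' in Theorem~\ref{thm:simulability_states}, which is why no extra assumption is needed for the qutrit case.

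To finish, trade positivity for complete positivity. Since $\Phi$ is a positive map from $M_2(\mathbb{C})$ to $M_d(\mathbb{C})$ with $d\le3$ --- so the product of the two dimensions is at most $6$ --- the Woronowicz decomposition~\cite{Wor76} yields $\Phi=\Phi_1+\Phi_2\circ\mathcal{T}$ with $\Phi_1,\Phi_2$ completely positive and $\mathcal{T}$ the transposition in the basis where the $\pi_a$ are real. Because $\pi_a^{\mathsf T}=\pi_a$ and $\openone^{\mathsf T}=\openone$, the completely positive map $\Phi':=\Phi_1+\Phi_2$ coincides with $\Phi$ on all the $\pi_a$ and on $\openone$; hence $\Phi'$ is completely positive, unital, and maps $\pi_a$ to $\tau_a$, and $\mathcal{C}:=\Phi'^\dagger$ is the required channel.

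The main obstacle is the extension step of the second paragraph --- establishing that $\Phi$ is positive on $\spn\{\pi_a\}$ and extending it to a positive map on the whole matrix algebra, which is exactly the delicate point of the proof of Theorem~\ref{thm:simulability_states} and is where both the reality of $\{\pi_a\}$ and the completeness relation are used. The reliance on the Woronowicz decomposition is what confines the result to qubit and qutrit measurements, since positive maps between matrix algebras stop being decomposable once the product of their dimensions exceeds $6$.
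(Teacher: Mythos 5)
Your overall architecture (build a positive unital map sending $\pi_a\mapsto\tau_a$, then use the Woronowicz decomposition together with the reality of $\{\pi_a\}$ to replace it by a completely positive one) is the same as the paper's, and your well-definedness argument, your verification of positivity on $\spn\{\pi_a\}$, and the final Woronowicz step are all sound. The gap is in the step you yourself call the heart of the argument: the extension of $\Phi$ from the operator system $\spn\{\pi_a\}$ to all of $M_2(\mathbb{C})$. There is no ``Hahn--Banach-type extension'' theorem for \emph{operator-valued} positive maps: Krein-type extension works for positive \emph{functionals}, and Arveson's extension theorem requires \emph{complete} positivity precisely because merely positive maps on operator systems fail to extend in general. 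Nor is this what the proof of Theorem~\ref{thm:simulability_states} does --- there the map is constructed explicitly as $S_1^+S_0$, and its positivity is not obtained by abstract extension but from a geometric hypothesis ($S_1^+S_1\,\mathbb{E}_1\subseteq\mathbb{E}_1$) that is verified only because the qubit effect cone is a Lorentz cone with axis $\openone$. So as written, your second paragraph asserts the existence of the extension without a valid justification, and the justification it points to does not exist in the cited proof.

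The gap is fillable, but only by invoking the same qubit-specific geometry the paper uses. Take $P$ to be the orthogonal (Hilbert--Schmidt) projection of $M_2(\mathbb{C})$ onto $\spn\{\pi_a\}$ and set $\tilde\Phi:=\Phi\circ P$; this is exactly the adjoint of the paper's pseudoinverse construction $C=M_1^+M_0$. For $\tilde\Phi$ to be positive you need $P$ to map positive semidefinite operators to positive semidefinite elements of $\spn\{\pi_a\}$, and this holds for a qubit because the cone of positive semidefinite $2\times2$ matrices is a Euclidean (Lorentz) cone with axis $\openone$, and $\openone\in\spn\{\pi_a\}$ by completeness, so projecting onto a subspace containing the axis cannot leave the cone. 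This is precisely the content of the paper's hypothesis $M_1^+M_1\,\mathbb{S}_1\subseteq\mathbb{S}_1$ and of the surrounding remark that the sphere is the only body stable under projections through its center; it is also the reason the theorem requires $\{\pi_a\}$ to be a \emph{qubit} measurement, a restriction your argument would otherwise not explain (the Woronowicz step alone would permit $\{\pi_a\}$ to be a qutrit measurement, which the theorem does not claim). Once this geometric fact is supplied, your construction $\Phi\circ P$ is positive and unital, and the rest of your proof goes through.
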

The       proof       is        given       in       Section
\ref{sec:proof_measurement}. As before,  the assumption that
measurement $\{\pi_a \}$ is real cannot be relaxed.

\subsection{Semi-device independent simulability  of families of
  states}
\label{sec:sdi_simulability_states}

Suppose  that a  black box  preparator with  $m$ buttons  is
given, and  let us  denote with  $\rho_x$ the  unknown state
prepared upon the pressure of button $x$. Consider the setup
where a  black box tester  with $n$ buttons is  connected to
the  black  box  preparator,  and let  us  denote  with  $\{
\pi_{0|y}, \pi_{1|y}  := \openone  - \pi_{0|y} \}$  the test
performed upon the pressure of button $y$. One has
\begin{align}
  \label{eq:circuit}
  \qc{\lstick{x  \in [0,  m-1]}  &  \prepareC{\rho_x} \cw  &
    \multimeasureD{2}{\pi_{a|y}} \\ & & \nghost{\pi_{a|y}} &
    \rstick{a \in [0, 1]} \cw \\ & \lstick{y \in [0, n-1]} &
    \cghost{\pi_{a|y}}}
\end{align}
For each $y$, by  running the experiment asymptotically many
times   one   collects   the  vectors   $\mathbf{q}_y$   and
$\mathbf{u} - \mathbf{q}_y$ ($\mathbf{u}$ denotes the vector
with unit entries) whose  $x$-th entry are the probabilities
$\Tr[  \rho_x  \pi_{0|y}]$  and  $\Tr[  \rho_x  \pi_{1|y}]$,
respectively, that is
\begin{align*}
  \left[  \mathbf{q}_y   \right]_x  :=  \Tr   \left[  \rho_x
    \pi_{0|y} \right].
\end{align*}

We call semi-device independent  simulability the problem of
characterizing the class of all  families of states that can
be  simulated by  the black  box $\{  \rho_x \}$,  for which
simulability  can be  certified based  on distributions  $\{
\mathbf{q}_y  \}$  and  $\{ \mathbf{u}  -  \mathbf{q}_y  \}$
without any characterisation of the tests $\{ \pi_{a|y} \}$,
under an assumption on the Hilbert space dimension.

Here,   we   will   address  the   semi-device   independent
simulability problem  under the promise that  $\{ \rho_x \}$
is  a family  of qubit  states.  In  this case,  the testing
region~\cite{Dal19,  DBBT18}  is  the  convex  hull  of  the
isolated  points  $0$  and  $\mathbf{u}$  with  a  (possibly
degenerate)    ellipsoid    centered   in    $\mathbf{u}/2$.
Conversely, for any (possibly degenerate) ellipsoid centered
in $\mathbf{u}/2$ and contained in the hypercube $[0, 1]^m$,
its convex  hull with  $0$ and  $\mathbf{u}$ is  the testing
region  of a  qubit family  of states.   In general,  such a
testing  region  identifies  the  family  of  states  up  to
unitaries and anti-unitaries.

We will further make the  restriction that the black box $\{
\rho_x \}$ has $m = 2$  buttons, that is, $\{ \rho_x\}$ is a
dichotomy. Notice  that any  qubit dichotomy  is necessarily
real.   Hence,   in  the  discussion  above   the  (possibly
degenerate)  ellipsoid   becomes  a   (possibly  degenerate)
ellipse.   Additionally,  since two  anti-unitarily  related
qubit  dichotomies  are   also  unitarily-related,  a  qubit
dichotomy is identified by its range up to unitaries only.

Due  to Theorem~\ref{thm:simulability_states},  we have  the
following result:
\begin{cor}
  \label{cor:dds_states}
  If the convex hull of points $0$ and $\mathbf{u}$ with any
  given ellipse  centered in  $\mathbf{u}/2$ is a  subset of
  $\conv( 0, \mathbf{u}, \{ \mathbf{q}_y \}, \{ \mathbf{u} -
  \mathbf{q}_y \})$ it  is the testing region  of some qubit
  dichotomy that can be simulated by $\{ \rho_x \}$.
\end{cor}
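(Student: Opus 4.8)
The plan is to obtain the statement as a direct composition of two ingredients already at hand: the geometric description of testing regions of qubit families of states recalled just above the statement (from~\cite{Dal19, DBBT18}), and Theorem~\ref{thm:simulability_states}. Write $E$ for the given ellipse centered in $\mathbf{u}/2$ and set $\mathcal{T} := \conv(0, \mathbf{u}, E)$ for the set appearing in the hypothesis.

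First I would check that $\mathcal{T}$ is the testing region of some qubit dichotomy. Since $E \subseteq \mathcal{T}$ and, by hypothesis, $\mathcal{T} \subseteq \conv(0, \mathbf{u}, \{\mathbf{q}_y\}, \{\mathbf{u}-\mathbf{q}_y\})$, it suffices to note that each entry of each $\mathbf{q}_y$, namely $\Tr[\rho_x \pi_{0|y}]$, and each entry of each $\mathbf{u}-\mathbf{q}_y$, namely $\Tr[\rho_x \pi_{1|y}]$, is a probability, so all these points lie in the hypercube $[0,1]^2$ and hence so does their convex hull; therefore $E \subseteq \mathcal{T} \subseteq [0,1]^2$. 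By the characterization recalled above, the convex hull of the isolated points $0$ and $\mathbf{u}$ with an ellipse centered in $\mathbf{u}/2$ and contained in $[0,1]^2$ is exactly the testing region of some qubit dichotomy $\{\sigma_x\}$, i.e.\ $\mathcal{R}(\{\sigma_x\}) = \mathcal{T}$. This settles the first half of the claim.

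Next I would show that $\{\sigma_x\}$ can be simulated by $\{\rho_x\}$. Since any qubit dichotomy is real, Theorem~\ref{thm:simulability_states} applies and reduces $\{\sigma_x\} \preceq \{\rho_x\}$ to the inclusion $\mathcal{R}(\{\sigma_x\}) \subseteq \mathcal{R}(\{\rho_x\})$. Here I would use that $\mathcal{R}(\{\rho_x\})$ is convex, being the image of the convex set of effects $0 \le \pi \le \openone$ under the linear map $\pi \mapsto (\Tr[\rho_x \pi])_x$, and that it contains $0$ (take $\pi = 0$), $\mathbf{u}$ (take $\pi = \openone$), every $\mathbf{q}_y$ (take $\pi = \pi_{0|y}$) and every $\mathbf{u}-\mathbf{q}_y$ (take $\pi = \openone - \pi_{0|y}$). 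Consequently $\conv(0, \mathbf{u}, \{\mathbf{q}_y\}, \{\mathbf{u}-\mathbf{q}_y\}) \subseteq \mathcal{R}(\{\rho_x\})$, and combining this with $\mathcal{R}(\{\sigma_x\}) = \mathcal{T} \subseteq \conv(0, \mathbf{u}, \{\mathbf{q}_y\}, \{\mathbf{u}-\mathbf{q}_y\})$ yields the required inclusion; Theorem~\ref{thm:simulability_states} then gives $\{\sigma_x\} \preceq \{\rho_x\}$, completing the argument.

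I do not expect a genuine obstacle: the corollary is essentially bookkeeping once Theorem~\ref{thm:simulability_states} is in place. The two points that need a little care are (i) verifying that the ellipse really lies in the hypercube $[0,1]^2$, so that the result of~\cite{Dal19, DBBT18} is applicable, which is why one passes through the observation that the collected data points are probabilities, and (ii) invoking convexity of the black box's testing region to absorb all the data points $\{\mathbf{q}_y\}$ and $\{\mathbf{u}-\mathbf{q}_y\}$ into $\mathcal{R}(\{\rho_x\})$. Both are routine.
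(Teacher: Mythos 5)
Your argument is correct and is precisely the intended derivation: the paper justifies Corollary~\ref{cor:dds_states} simply by appeal to Theorem~\ref{thm:simulability_states} together with the characterization of qubit testing regions recalled just before it, and your two steps (the ellipse lies in $[0,1]^2$ because the data are probabilities, and $\conv(0,\mathbf{u},\{\mathbf{q}_y\},\{\mathbf{u}-\mathbf{q}_y\})\subseteq\mathcal{R}(\{\rho_x\})$ by convexity of the testing region) fill in exactly the bookkeeping the paper leaves implicit. The observation that any qubit dichotomy is real, so that Theorem~\ref{thm:simulability_states} applies, is likewise the same point the paper makes in the surrounding text.
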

Notice   that,  on   the   one  hand,   the  hypothesis   of
Corollary~\ref{cor:dds_states} represents  only a sufficient
condition  for a  qubit  dichotomy to  be  simulable by  $\{
\rho_x  \}$.   On  the  other  hand,  for  any  other  qubit
dichotomy that can be simulated  by $\{ \rho_x \}$ (if any),
simulability   cannot   be   certified  in   a   semi-device
independent way unless further data is collected.

As an application, consider the  case when one of the states
prepared  by the  dichotomy  (say $\rho_1$)  is the  thermal
state at infinite temperature, that is $\rho_1 = \openone/2$
(for this example we are assuming more than just the Hilbert
space  dimension,  although  no  knowledge  of  $\rho_0$  is
assumed).   Consider the  problem of  finding the  dichotomy
with maximal free  energy among those that  can be simulated
by  $\{ \rho_0,  \openone/2 \}$  through a  Gibbs-preserving
channel (in this case, a unit-preserving channel).

In this case, it immediately follows that the free energy is
monotone in  the area  of the  range.  This  can be  seen as
follows.  First, notice that the free energy in this case is
equal to the neg-entropy $-S(\rho_0)$, since the free energy
is equal  to the relative entropy  $S(\rho_0 || \openone/2)$
and  by  definition  one   has  $S(\rho_0||\openone/2)  =  -
S(\rho_0)$.   In turn,  $S(\rho_0) =  h(\lambda_\pm)$, where
$h(\cdot)$ denotes the binary  entropy and $\lambda_\pm$ the
eigenvalues of $\rho_0$. By setting $\lambda_{\pm} = 1/2 \pm
a$, by explicit computation  it immediately follows that the
volume  of  the  range  of $\{  \rho_0,  \openone/2  \}$  is
proportional to $a$, hence the statement is proved.

Suppose one test is performed on the black box dichotomy and
the following probability vectors are observed:
\begin{align}
  \label{eq:probability}
  \mathbf{q}_0     =      \frac12     \begin{pmatrix}1     -
    \epsilon\\1\end{pmatrix},     \qquad    \mathbf{u}     -
    \mathbf{q}_0     =     \frac12    \begin{pmatrix}1     +
      \epsilon\\1\end{pmatrix}.
\end{align}
for  some value  of parameter  $0 \le  \epsilon \le  1$. The
situation is illustrated in Fig.~\ref{fig:states_prob}.
\begin{figure}[h!]
  \begin{center}
  \begin{overpic}[width=.49\columnwidth]{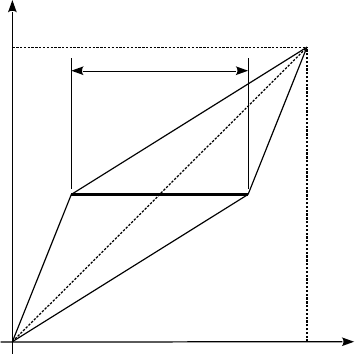}
    \put (-3, -5) {$0$}
    \put (-3, 84) {$1$}
    \put (85, -5) {$1$}
    \put (88, 88) {$\mathbf{u}$}
    \put (42, 38) {$\mathbf{u}/2$}
    \put (19, 38) {$\mathbf{q}_0$}
    \put (63, 38) {$\openone-\mathbf{q}_0$}
    \put (44, 81) {$\epsilon$}
    \put (92, -5) {$\Tr[\rho_0 \pi]$}
    \put (-29, 95) {$\Tr[\rho_1 \pi]$}
  \end{overpic}
  \end{center}
  \caption{Probability     vectors    $\mathbf{q}_0$     and
    $\mathbf{u}    -     \mathbf{q}_0$    as     given    by
    Eq.~\eqref{eq:probability} lie  at the  vertices of  a a
    line  segment  of  length  $\epsilon$  and  centered  in
    $\mathbf{u}/2$.  The maximally  committal testing region
    for  qubit   dichotomy  $\{  \sigma_0,   \openone/2  \}$
    enclosed   in    $\conv(0,   \mathbf{u},   \mathbf{q}_0,
    \mathbf{u}  -  \mathbf{q}_0)$   is  given  by  $\conv(0,
    \mathbf{u},  \mathbf{q}_0,  \mathbf{u} -  \mathbf{q}_0)$
    itself.}
  \label{fig:states_prob}
\end{figure}
We assume that the black box implements a qubit dichotomy, a
justified  assumption   since  the  probability   vector  in
Eq.~\eqref{eq:probability}  belongs,  for  example,  to  the
range of any  qubit dichotomy $\{ \phi,  \openone/2 \}$, for
any  pure state  $\phi$. It  is easy  to derive  the maximum
volume   range   enclosed    in   $\conv(   0,   \mathbf{u},
\mathbf{q}_0, \mathbf{u}  - \mathbf{q}_0  )$, and  to verify
using Ref.~\cite{DBBV17} that it  correspond to the range of
any        $\epsilon$-depolarized       dichotomy        $\{
\mathcal{D}_\epsilon (  \phi), \openone/2 \}$, for  any pure
state $\phi$.

\subsection{Semi-device     independent      simulability     of
  measurements}
\label{sec:sdi_simulability_measurement}

Suppose that  a black box  measurement with $n$  outcomes is
given,  and   let  us   denote  with  $\pi_a$   the  unknown
measurement element corresponding  to outcome $a$.  Consider
the setup where  a black box preparator with  $m$ buttons is
connected to  the black box  measurement, and let  us denote
with $\rho_x$  the unknown state prepared  upon the pressure
of button $x$. One has
\begin{align}
  \label{eq:meas_circuit}
  \qc{\lstick{x  \in [0,  m-1]}  &  \prepareC{\rho_x} \cw  &
    \measureD{\pi_a} & \rstick{a \in [0, n-1]} \cw}
\end{align}
For each $x$, by  running the experiment asymptotically many
times    one   collects    the   probability    distribution
$\mathbf{p}_x$ of outcome $a$, that is
\begin{align*}
  \left[ \mathbf{p}_x  \right]_a := \Tr \left[  \rho_x \pi_a
    \right].
\end{align*}

We call semi-device independent  simulability the problem of
characterizing  the class  of all  measurements that  can be
simulated  by  the  black  box  $\{  \pi_a  \}$,  for  which
simulability  can be  certified based  on distributions  $\{
\mathbf{p}_x \}$ without any  characterisation of the states
$\{ \rho_x  \}$, under  an assumption  on the  Hilbert space
dimension.

Here,   we   will   address  the   semi-device   independent
simulability problem under the promise that $\{ \pi_y \}$ is
a qubit measurement.  In  this case, the range~\cite{DBBV17,
  DBBT18} is a (possibly degenerate) ellipsoid.  Conversely,
any   (possibly   degenerate)   ellipsoid  subset   of   the
probability simplex is the range of a qubit measurement.  In
general,  such  a range  identifies  the  measurement up  to
unitaries and anti-unitaries.

We will further make the  restriction that the black box $\{
\pi_a  \}$   has  $n  =   3$  outcomes.   Notice   that  any
three-outcome qubit  measurement is necessarily real  due to
the completeness  condition. Hence, in the  discussion above
the  (possibly  degenerate)  ellipsoid becomes  a  (possibly
degenerate)  ellipse.    Additionally,  since  three-outcome
anti-unitarily related qubit measurements are also unitarily
related, a  three-outcome measurement  is identified  by its
range up to unitaries only.

Due to  Theorem~\ref{thm:simulability_measurements}, we have
the following result:
\begin{cor}
  \label{cor:dds_meas}
  Any ellipse subset of $\conv( \{ \mathbf{p}_x \} )$ is the
  range of some qubit  three-outcome measurement that can be
  simulated by $\{ \pi_a \}$.
\end{cor}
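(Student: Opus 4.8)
The plan is to derive Corollary~\ref{cor:dds_meas} directly from Theorem~\ref{thm:simulability_measurements}, in close analogy with the way Corollary~\ref{cor:dds_states} follows from Theorem~\ref{thm:simulability_states}. First I would recall, as stated in Section~\ref{sec:sdi_simulability_measurement}, that the range $\mathcal{R}(\{\tau_a\})$ of a three-outcome qubit measurement is a (possibly degenerate) ellipse contained in the probability simplex on three outcomes, and that, conversely, every such ellipse is realized as the range of some three-outcome qubit measurement (which, by the completeness condition, is automatically real). So fix an arbitrary ellipse $E$ with $E\subseteq\conv(\{\mathbf{p}_x\})$; by this converse there is a three-outcome qubit measurement $\{\tau_a\}$ with $\mathcal{R}(\{\tau_a\})=E$, and $\{\tau_a\}$ is real.

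Next I would observe that $\conv(\{\mathbf{p}_x\})\subseteq\mathcal{R}(\{\pi_a\})$. Indeed, each observed vector $\mathbf{p}_x$ has entries $[\mathbf{p}_x]_a=\Tr[\rho_x\pi_a]$, so $\mathbf{p}_x\in\mathcal{R}(\{\pi_a\})$ by the definition of the range of a measurement; and $\mathcal{R}(\{\pi_a\})$ is convex, since the set of density matrices $\rho$ is convex and $\rho\mapsto(\Tr[\rho\pi_a])_a$ is affine. Hence $\conv(\{\mathbf{p}_x\})$, being the smallest convex set containing all the $\mathbf{p}_x$, is contained in $\mathcal{R}(\{\pi_a\})$. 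Chaining the two inclusions gives
\begin{align*}
  \mathcal{R}(\{\tau_a\})=E\subseteq\conv(\{\mathbf{p}_x\})\subseteq\mathcal{R}(\{\pi_a\}).
\end{align*}

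With this range inclusion in hand, the hypotheses of Theorem~\ref{thm:simulability_measurements} are met: $\{\tau_a\}$ is a qubit measurement (in particular qubit-or-qutrit), $\{\pi_a\}$ is a real qubit measurement by the promise of this section together with the automatic reality of three-outcome qubit measurements, and $\mathcal{R}(\{\tau_a\})\subseteq\mathcal{R}(\{\pi_a\})$. The theorem therefore yields $\{\tau_a\}\preceq\{\pi_a\}$, i.e. there is a channel $\mathcal{C}$ with $\tau_a=\mathcal{C}^\dagger(\pi_a)$ for all $a$. Thus the ellipse $E$ is the range of the qubit three-outcome measurement $\{\tau_a\}$, which is simulated by $\{\pi_a\}$, which is exactly the claim.

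I do not expect a genuine obstacle here: the corollary is a straightforward specialization of Theorem~\ref{thm:simulability_measurements} once the ellipse-versus-range dictionary for qubit three-outcome measurements is invoked. The only point that needs a line of care is the convexity argument showing $\conv(\{\mathbf{p}_x\})\subseteq\mathcal{R}(\{\pi_a\})$, and the bookkeeping that every three-outcome qubit measurement — both the simulating box $\{\pi_a\}$ and the target $\{\tau_a\}$ — is real, so that the reality hypothesis of the theorem is discharged automatically rather than assumed. As with Corollary~\ref{cor:dds_states}, it is worth remarking that the condition is only sufficient: a three-outcome qubit measurement simulable by $\{\pi_a\}$ whose range is not contained in $\conv(\{\mathbf{p}_x\})$ cannot be certified as such from the data $\{\mathbf{p}_x\}$ alone without collecting more.
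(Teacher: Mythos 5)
Your proposal is correct and follows exactly the route the paper intends: the corollary is stated as an immediate consequence of Theorem~\ref{thm:simulability_measurements} together with the preceding discussion (every ellipse in the simplex is the range of a three-outcome qubit measurement, which is automatically real by completeness), and your chain $E\subseteq\conv(\{\mathbf{p}_x\})\subseteq\mathcal{R}(\{\pi_a\})$ via convexity of the range is precisely the missing bookkeeping the paper leaves implicit. No issues.
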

Notice   that,  on   the   one  hand,   the  hypothesis   of
Corollary~\ref{cor:dds_meas}  represents  only a  sufficient
condition  for  a  qubit  three-outcome  measurement  to  be
simulable  by $\{  \pi_a \}$.   On the  other hand,  for any
other qubit three-outcome measurement  that can be simulated
by $\{ \pi_a \}$ (if  any), simulability cannot be certified
in  a semi-device  independent  way unless  further data  is
collected.

As an  application, consider  the problem of  finding, among
the measurements that can be  simulated by the black box $\{
\pi_a \}$, the one with maximal simulation power, quantified
according to  Theorem~\ref{thm:simulability_measurements} by
the volume of its range. Suppose $m$ states are fed into the
black-box  measurement and  the following  distributions are
observed:
\begin{align}
  \label{eq:distributions}
  \mathbf{p}_x = \begin{pmatrix}
    2 - 2\cos \theta_x\\
    2 + cos \theta_x - \sqrt{3} \sin \theta_x\\
    2 + cos \theta_x + \sqrt{3} \sin \theta_x
  \end{pmatrix},
\end{align}
where $\theta_x := 2  \pi x / m$ and $x  \in [0, m-1]$. This
situation is depicted in Fig.~\ref{fig:distributions}.
\begin{figure}[h!]
  \vspace{3mm}
  \begin{center}
    \begin{overpic}[width=.45\columnwidth]{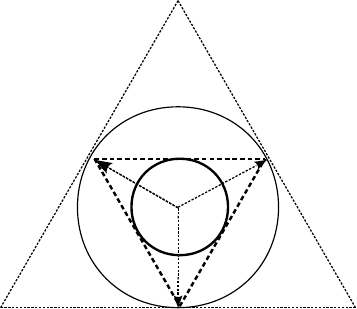}
      \put (33, 89) {$(1,0,0)$}
      \put (67, -8) {$(0,1,0)$}
      \put (0, -8) {$(0,0,1)$}
      \put (47, -8) {$\mathbf{p}_0$}
      \put (13, 41) {$\mathbf{p}_1$}
      \put (78, 41) {$\mathbf{p}_2$}
    \end{overpic}
    \hspace{.07\columnwidth}
    \begin{overpic}[width=.45\columnwidth]{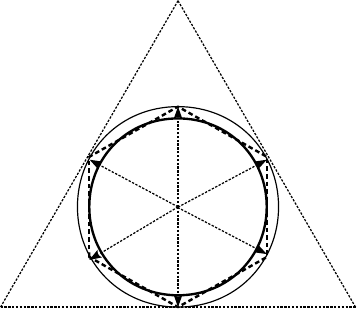}
      \put (33, 89) {$(1,0,0)$}
      \put (67, -8) {$(0,1,0)$}
      \put (0, -8) {$(0,0,1)$}
      \put (47, -8) {$\mathbf{p}_0$}
      \put (13, 12) {$\mathbf{p}_1$}
      \put (13, 41) {$\mathbf{p}_2$}
      \put (47, 60) {$\mathbf{p}_3$}
      \put (78, 41) {$\mathbf{p}_4$}
      \put (78, 12) {$\mathbf{p}_5$}
    \end{overpic}
  \end{center}
  \caption{In both left and  right figures, the outer dashed
    triangle   represents  the   simplex  of   three-outcome
    probability   distributions.    The  distributions   $\{
    \mathbf{p}_x  \}$ given  by Eq.~\eqref{eq:distributions}
    lie on the vertices of regular  polygons ($m = 3$ and $m
    =  6$ in  left  and right  figures, respectively).   The
    maximum   volume   ellipsoid   enclosed   in   $\conv(\{
    \mathbf{p}_x  \})$ is  the  inner circle,  which is  the
    range   of   a    $\epsilon$-depolarized   trine   qubit
    measurement   ($\epsilon   =   1/2$  and   $\epsilon   =
    \sqrt{3}/2$ in left and right figures, respectively).}
  \label{fig:distributions}
\end{figure}
We assume that the black box implements a qubit measurement,
a justified  assumption since  such distributions  belong to
the range of,  for example, a trine  qubit measurement, that
is, a  measurement whose elements  lie on the vertices  of a
regular triangle in the  Bloch sphere representation. It is
easy  to  derive  the  maximum  volume  ellipse~\cite{Joh48,
  Bal92, Tod16, BV04} enclosed in $\conv( \{ \mathbf{p}_x \}
)$,  and   to  verify   using  Ref.~\cite{DBBV17}   that  it
corresponds to the range of any $[\cos (\pi/m)]$-depolarized
trine measurement.

\section{Proofs of Theorems~\ref{thm:simulability_states} and~\ref{thm:simulability_measurements}}
\label{sec:proofs}

\setcounter{thm}{0}

The formalism of quantum information theory, used to present
our results  in Section~\ref{sec:results},  is not  the most
efficient to  prove such  statements.  Here, we  introduce a
more  efficient  formalism~\cite{DCP17}, that  provides  the
additional  benefit of  holding  for  any bilinear  physical
theory, not just quantum theory.

Each  system  is associated  with  a  dimension $\ell$,  and
states and  measurement elements are represented  by vectors
in $\R^\ell$.  Let us denote with $\mathbb{S}_\ell \subseteq
\R^\ell$ and $\mathbb{E}_\ell \subseteq  \R^\ell$ the set of
all  states  and  the   set  of  all  measurement  elements,
respectively.   The  probability  that  measurement  element
$\mathbf{e} \in  \mathbb{E}_\ell$ clicks  upon the  input of
state $\mathbf{s} \in \mathbb{S}_\ell$ is given by the inner
product  $\mathbf{s} \cdot  \mathbf{e}$.  Let  $\mathbf{u}_n
\in \R^n$  denote the  vector with unit  entries and  let us
choose    a   basis    in    which   $\mathbf{u}_\ell    \in
\mathbb{E}_\ell$ is  the measurement  element that  has unit
probability  of  click over  any  state.   For example,  for
quantum   systems  $\ell$   is  the   squared  Hilbert-space
dimension,   states   and   measurement  elements   can   be
represented  by  (generalized)  Pauli vectors,  their  inner
product  reduces to  the  Born  rule, and  $\mathbf{u}_\ell$
corresponds to the identity operator.

A  family  of  $n$  states   $\{  \mathbf{s}^k  \}$  can  be
conveniently represented by arranging the states as the rows
of  an   $n  \times  \ell$   matrix  $S$.   This   way,  the
corresponding linear  map $S  : \R^\ell  \to \R^n$  maps any
effect $\mathbf{e}$  into the  vector whose $k$-th  entry is
the   probability  $\mathbf{s}^k   \cdot  \mathbf{e}$.    It
immediately follows that, for any  family $S$ of states, one
has $S \mathbf{u}_\ell =  \mathbf{u}_n$. Analogously, an $n$
outcome  measurement  can  be  conveniently  represented  by
arranging its elements  $\{ \mathbf{e}^k \}$ as  the rows of
an $n \times \ell$ matrix  $M$.  This way, the corresponding
linear map $M \R^\ell \to  \R^n$ maps any state $\mathbf{s}$
into the probability distribution  whose $k$-th entry is the
probability $\mathbf{e}^k \cdot  \mathbf{s}$. It immediately
follows  that,  for  any   measurement  $M$,  one  has  $M^T
\mathbf{u}_n = \mathbf{u}_\ell$.

Finally,  we discuss  maps from  states to  states and  from
effects to effects.

\begin{dfn}[State morphism]
  A linear map $C :  \R^{\ell_0} \to \R^{\ell_1}$ is a state
  morphism  if   and  only  if  $C   \mathbb{S}_0  \subseteq
  \mathbb{S}_1$.
\end{dfn}

\begin{dfn}[Statistical morphism]
  A  linear  map $C  :  \R^{\ell_0}  \to \R^{\ell_1}$  is  a
  statistical  morphism  if  and  only  if  $C  \mathbb{E}_0
  \subseteq  \mathbb{E}_1$  and   $C  \mathbf{u}_{\ell_0}  =
  \mathbf{u}_{\ell_1}$.
\end{dfn}

In standard quantum  theory, a state morphism  is a positive
(not necessarily completely positive) trace-preserving (PTP)
map,   that  is,   a   transformation  of   states  in   the
Schr\"odinger picture.  Analogously,  a statistical morphism
is   a  positive   (not  necessarily   completely  positive)
unit-preserving  (PUP) map,  that  is,  a transformation  of
measurement  elements in  the  Heisenberg  picture. For  our
proofs, we do  not need an analogous  of complete positivity
for arbitrary bilinear theories.

\subsection{Simulability of families of states}
\label{sec:proofs_states}

Here we prove Theorem~\ref{thm:simulability_states}, that we
report here for the reader's convenience.

\begin{thm}
  For any  family $\{ \sigma_x  \}$ of qubit states  and any
  real family $\{ \rho_x \}$  of qubit states, the following
  are equivalent:
  \begin{itemize}
    \item $\{ \sigma_x \} \preceq \{ \rho_x \}$.
    \item  $\mathcal{R}   (  \{  \sigma_x  \}   )  \subseteq
      \mathcal{R} ( \{ \rho_x \} )$.
  \end{itemize}
  If  $\{   \rho_x  \}$   contains  the   identity  operator
  $\openone$ in its linear span, the statement holds even if
  $\{ \sigma_x \}$ is a family of qutrit states.
\end{thm}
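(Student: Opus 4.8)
The plan is to translate the statement into the matrix formalism just introduced and then invoke the Woronowicz decomposition of positive maps. Regard the family $\{\rho_x\}$ as an $m\times\ell_0$ matrix $R$ whose rows are the (real) Pauli vectors of the $\rho_x$, and $\{\sigma_x\}$ as an $m\times\ell_1$ matrix $\Sigma$. The key observation is that the testing-region inclusion $\mathcal{R}(\{\sigma_x\})\subseteq\mathcal{R}(\{\rho_x\})$ is exactly the statement that for every effect $\mathbf{e}\in\mathbb{E}_{\ell_1}$ there is an effect $\mathbf{f}\in\mathbb{E}_{\ell_0}$ with $R\mathbf{f}=\Sigma\mathbf{e}$; since $R\mathbf{u}_{\ell_0}=\mathbf{u}_m=\Sigma\mathbf{u}_{\ell_1}$ as well, this says precisely that there is a \emph{statistical morphism} $N:\R^{\ell_1}\to\R^{\ell_0}$ (a PUP map on the appropriate spaces) such that $R N=\Sigma$ on the relevant subspace, equivalently $\Sigma=R N$ after extending $N$ suitably. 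Dually, $\{\sigma_x\}\preceq\{\rho_x\}$ means there is a channel (CPTP map) $\mathcal{C}$ with $\Sigma=R\mathcal{C}^{\mathrm{rep}}$, i.e. the map pulling effects back is completely positive. So the whole theorem reduces to: \emph{if a positive unit-preserving map $N$ sends the effects pulled back from $\{\rho_x\}$ correctly, then it can be replaced by a completely positive one}, and the real/qubit/qutrit hypotheses are exactly what make this replacement possible.

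The main tool is the Woronowicz decomposition~\cite{Wor76}: any positive map between matrix algebras where one side is $2\times2$ or $3\times3$ (and the other is $2\times2$) decomposes as $\mathcal{C}_1+\mathcal{T}\circ\mathcal{C}_2$ with $\mathcal{C}_i$ completely positive and $\mathcal{T}$ the transpose. First I would argue that the statistical morphism $N$ realizing the region inclusion exists (this is the easy direction, essentially the contrapositive of Eq.~\eqref{eq:states_rng_incl}, or a Hahn--Banach/LP-duality separation argument on the compact convex testing regions). Then I would lift $N$ (which a priori is only defined on a subspace, the span of the pulled-back effects) to a genuine positive unit-preserving map on the full matrix algebra — here the hypothesis that $\openone$ lies in the linear span of $\{\rho_x\}$ in the qutrit case, and automatically in the qubit case, is what guarantees the lift stays unital and the extension is unobstructed. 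Applying Woronowicz to this lifted map gives $N=\mathcal{C}_1+\mathcal{T}\circ\mathcal{C}_2$. The point of the \emph{reality} assumption on $\{\rho_x\}$ is now this: since the rows of $R$ are real, $R\mathcal{T}=R$ in the Heisenberg picture restricted to what $\{\rho_x\}$ can see (the transpose acts trivially on real symmetric matrices), so $\mathcal{T}\circ\mathcal{C}_2$ has the same action as $\mathcal{C}_2$ when composed with $R$. Hence $\mathcal{C}:=\mathcal{C}_1+\mathcal{C}_2$ is completely positive, it is unit-preserving because $N$ was and $\mathcal{T}$ preserves $\openone$, and it reproduces $\Sigma=R\mathcal{C}$, giving $\{\sigma_x\}\preceq\{\rho_x\}$.

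I expect the genuine obstacle to be the middle step: extending the partially-defined statistical morphism $N$ to a bona fide positive unit-preserving map on all of $\R^{\ell_1}$ (or $\R^{\ell_0}$) \emph{before} one is allowed to apply Woronowicz, since Woronowicz needs a map defined on the whole algebra. One has to check that positivity on the cone of effects visible to $\{\rho_x\}$, together with correct behaviour on $\mathbf{u}$, propagates to positivity everywhere — this is where the qubit/qutrit dimensional restriction and the reality condition are truly used, and where the condition ``$\openone\in\operatorname{span}\{\rho_x\}$'' enters to control the unital constraint in the qutrit case. The forward direction (Eq.~\eqref{eq:states_sim} $\Rightarrow$ Eq.~\eqref{eq:states_rng_incl}) is immediate and already noted in the text. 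Once the extension and the Woronowicz application are in place, the cancellation of $\mathcal{T}$ against a real $R$ and the bookkeeping of unitality are routine.
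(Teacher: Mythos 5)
Your overall strategy---reduce the region inclusion to the existence of a statistical morphism, apply the Woronowicz decomposition $C' = p\,C_0 + (1-p)\,T C_1$, and use the reality of $\{\rho_x\}$ (i.e.\ $S_1 T = S_1$) to absorb the transpose into a completely positive map---is exactly the route the paper takes for its main case, and that part of your argument is sound. However, there are two genuine gaps. First, you correctly identify the construction of the statistical morphism as the real obstacle, but you only flag it rather than resolve it. The paper resolves it with a concrete choice, $C := S_1^+ S_0$ (Moore--Penrose pseudoinverse), together with the geometric observation that $S_1^+ S_1$ is the orthogonal projector onto $\rng S_1^T$ and that the qubit effect set $\mathbb{E}_1$ is a cone with hyperspherical cross-section and axis along $\mathbf{u}_{\ell_1}$, so that projecting onto any subspace containing the axis keeps effects inside $\mathbb{E}_1$; this is what upgrades the pointwise correspondence ``for each effect $\mathbf{e}$ there is an effect $\mathbf{f}$'' into a single linear positive unital map to which Woronowicz can be applied. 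Without some such argument your proof does not produce the map you feed into the decomposition.

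Second, and more seriously, your claim that $\openone\in\spn\{\rho_x\}$ holds ``automatically in the qubit case'' is false: a generic dichotomy of two pure qubit states already violates it. The first clause of the theorem (qubit $\{\sigma_x\}$, no span hypothesis) therefore needs a separate argument when the identity is \emph{not} in the span, and your proposal does not cover this case at all. The paper handles it by a different mechanism: Lemma~\ref{lmm:dependencies} shows that any affine dependency among the $\rho_x$ is inherited by the $\sigma_x$, so when $\openone\notin\spn\{\rho_x\}$ every real qubit state $\rho_x$ is an affine combination of $\rho_0$ and $\rho_1$, the problem collapses to a dichotomy, and the Alberti--Uhlmann criterion (via Buscemi--Gour, Lemma~\ref{lmm:alberti_ulhmann}) finishes the job. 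As written, your argument establishes at most the qutrit-target clause, where the span hypothesis is explicitly assumed, and even there only modulo the unfinished extension step.
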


By adopting  the formalism  of bilinear theories,  we denote
with $S_0$ and $S_1$ the families of states corresponding to
$\{ \sigma_x \}$ and $\{  \rho_x \}$, respectively. To prove
the theorem, we need to distinguish two cases. First, let us
consider the  case when  the linear span  of $\{  \rho_x \}$
contains the  identity operator $\openone$, that  is, $S_1^+
S_1   \mathbf{u}_{\ell_1}   =  \mathbf{u}_{\ell_1}$,   where
$(\cdot)^+$ denotes the Moore-Penrose pseudoinverse.  In the
following Lemma  we show  that, under the  hypothesis $S_1^+
S_1  \mathbb{E}_1 \subseteq  \mathbb{E}_1$, range  inclusion
between  two  families  of   states  is  equivalent  to  the
existence of a statistical morphism between them.

\begin{lmm}
  \label{lmm:states}
  For any  families of states  $S_0 : \R^{\ell_0}  \to \R^n$
  and  $S_1 :  \R^{\ell_1} \to  \R^n$ such  that $S_1^+  S_1
  \mathbb{E}_1  \subseteq   \mathbb{E}_1$  and   $S_1^+  S_1
  \mathbf{u}_{\ell_1} =  \mathbf{u}_{\ell_1}$, the following
  are equivalent:
  \begin{enumerate}
  \item  \label{item:states_majorization} $S_0  \mathbb{E}_0
    \subseteq S_1 \mathbb{E}_1$,
  \item  \label{item:states_simulability}   there  exists  a
    statistical morphism  $C : \R^{\ell_0}  \to \R^{\ell_1}$
    such that $S_0 = S_1 C$.
  \end{enumerate}
\end{lmm}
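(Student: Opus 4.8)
The plan is to prove the equivalence by the easy direction first and then attack the hard converse through the Woronowicz/Woronowicz-type decomposition alluded to in the introduction. The direction $(\ref{item:states_simulability}) \Rightarrow (\ref{item:states_majorization})$ is immediate: if $C$ is a statistical morphism with $S_0 = S_1 C$, then $C \mathbb{E}_0 \subseteq \mathbb{E}_1$ gives $S_0 \mathbb{E}_0 = S_1 C \mathbb{E}_0 \subseteq S_1 \mathbb{E}_1$. So all the work is in $(\ref{item:states_majorization}) \Rightarrow (\ref{item:states_simulability})$.

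For the converse, the natural candidate is $C := S_1^+ S_0$, since then $S_1 C = S_1 S_1^+ S_0 = S_0$ provided $S_0$ factors through the row space of $S_1$ — and the hypothesis $S_0 \mathbb{E}_0 \subseteq S_1 \mathbb{E}_1$ forces exactly this, because $\mathbb{E}_0$ spans $\R^{\ell_0}$ and $S_1 \mathbb{E}_1 \subseteq \rng(S_1)$, so $\rng(S_0) \subseteq \rng(S_1)$ and hence $S_1 S_1^+$ acts as the identity on $\rng(S_0)$. The first thing I would check is therefore that with this choice $S_1 C = S_0$ holds on the nose. Next I would verify the unit-preservation condition: $C \mathbf{u}_{\ell_0} = S_1^+ S_0 \mathbf{u}_{\ell_0} = S_1^+ \mathbf{u}_n = S_1^+ S_1 \mathbf{u}_{\ell_1} = \mathbf{u}_{\ell_1}$, where the middle equality uses $S_0 \mathbf{u}_{\ell_0} = \mathbf{u}_n = S_1 \mathbf{u}_{\ell_1}$ (the defining property of families of states) and the last uses the second hypothesis on $S_1$. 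So far this requires no positivity at all.

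The crux is showing $C \mathbb{E}_0 \subseteq \mathbb{E}_1$, i.e. that $C$ maps valid measurement elements to valid measurement elements. Here is where the two hypotheses on $S_1$ enter seriously. Take $\mathbf{e} \in \mathbb{E}_0$. By the range-inclusion hypothesis there is some $\mathbf{e}' \in \mathbb{E}_1$ with $S_0 \mathbf{e} = S_1 \mathbf{e}'$. Now $C \mathbf{e} = S_1^+ S_0 \mathbf{e} = S_1^+ S_1 \mathbf{e}'$, which is the projection of $\mathbf{e}'$ onto the row space of $S_1$. The point is that $\mathbf{e}'$ itself need not be unique, but its image under the orthogonal projector $S_1^+ S_1$ is determined, and the hypothesis $S_1^+ S_1 \mathbb{E}_1 \subseteq \mathbb{E}_1$ is precisely the statement that this projection lands back in $\mathbb{E}_1$. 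Hence $C \mathbf{e} = S_1^+ S_1 \mathbf{e}' \in \mathbb{E}_1$, as desired. This completes the verification that $C$ is a statistical morphism.

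The main obstacle — and the reason the hypotheses are stated the way they are — is that the argument above is really only the skeleton: in the intended quantum application one must still connect "$C$ is a statistical morphism" (a PUP map) to "$\{\rho_x\}$ contains $\openone$ in its span implies the PUP map can be taken completely positive", which is where Woronowicz's decomposition of positive maps on low-dimensional matrix algebras does the heavy lifting. Within the scope of this Lemma, though, that issue does not arise — the Lemma is stated purely at the level of statistical morphisms in a bilinear theory — so the proof here should genuinely reduce to the pseudoinverse bookkeeping above, with the only subtlety being the careful use of $S_1^+ S_1$ as an orthogonal projector onto $\rng(S_1^T)$ and the observation that $\rng(S_0) \subseteq \rng(S_1)$ follows from range inclusion together with $\spn \mathbb{E}_0 = \R^{\ell_0}$.
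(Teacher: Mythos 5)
Your proposal is correct and follows essentially the same route as the paper: both take $C := S_1^+ S_0$, verify $C\mathbb{E}_0 \subseteq S_1^+ S_1 \mathbb{E}_1 \subseteq \mathbb{E}_1$ and $C\mathbf{u}_{\ell_0} = S_1^+ S_1 \mathbf{u}_{\ell_1} = \mathbf{u}_{\ell_1}$ from the two hypotheses on $S_1$, and recover $S_0 = S_1 C$ from $\rng S_0 \subseteq \rng S_1$ together with the fact that $S_1 S_1^+$ projects onto $\rng S_1$. The only cosmetic difference is that you argue elementwise where the paper chains set inclusions.
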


\begin{proof}
  Implication   $\ref{item:states_majorization}   \Leftarrow
  \ref{item:states_simulability}$ is trivial.

  Implication   $\ref{item:states_majorization}  \Rightarrow
  \ref{item:states_simulability}$    can    be   shown    as
  follows. Let
  \begin{align}
    \label{eq:states_map}
    C := S_1^+ S_0.
  \end{align}

  Let us first show that  map $C$ is a statistical morphism.
  One has
  \begin{align*}
    C \mathbb{E}_0 = S_1^+  S_0 \mathbb{E}_0 \subseteq S_1^+
    S_1 \mathbb{E}_1 \subseteq \mathbb{E}_1,
  \end{align*}
  where the equality  follows from Eq.~\eqref{eq:states_map}
  and  the  inclusions  follow   from  the  hypothesis  $S_0
  \mathbb{E}_0  \subseteq S_1  \mathbb{E}_1$ and  $S_1^+ S_1
  \mathbb{E}_1   \subseteq    \mathbb{E}_1$,   respectively.
  Moreover,
  \begin{align*}
    C \mathbf{u}_{\ell_0} =  S_1^+ S_0 \mathbf{u}_{\ell_0} =
    S_1^+ \mathbf{u}_n,
  \end{align*}
  where the equalities follow from Eq.~\eqref{eq:states_map}
  and  from   the  hypothesis  $S_0   \mathbf{u}_{\ell_0}  =
  \mathbf{u}_n$,  respectively.   Since by  hypothesis  $S_1
  \mathbf{u}_{\ell_1} =  \mathbf{u}_n$, one also  has $S_1^+
  S_1 \mathbf{u}_{\ell_1} = S_1^+ \mathbf{u}_n$, and hence
  \begin{align*}
    C \mathbf{u}_{\ell_0} =  S_1^+ S_1 \mathbf{u}_{\ell_1} =
    \mathbf{u}_{\ell_1},
  \end{align*}
  where the  second inequality follows by  hypothesis. Hence
  map $C$ is a statistical morphism.

  Let  us now  show  that  $S_0 =  S_1  C$.  By  multiplying
  Eq.~\eqref{eq:states_map} from the left by $S_1$ one has
  \begin{align*}
    S_1 C = S_1 S_1^+ S_0.
  \end{align*}
  Since   $\spn  \mathbb{E}_0   =  \R^{\ell_0}$   and  $\spn
  \mathbb{E}_1  =   \R^{\ell_1}$,  from   $S_0  \mathbb{E}_0
  \subseteq S_1  \mathbb{E}_1$ one  has $\rng  S_0 \subseteq
  \rng S_1$.   Since $S_1 S_1^+$  is the projector  on $\rng
  S_1$, one has $S_0 = S_1 C$.
\end{proof}

It  is   easy  to  see   that  the  hypothesis   $S_1^+  S_1
\mathbb{E}_1        \subseteq        \mathbb{E}_1$        in
Lemma~\ref{lmm:states} is satisfied for  any family $S_1$ of
states   if   $\mathbb{E}_1$  is   a   $\ell_1$--dimensional
(hyper)--cone         with         $(\ell_1-1)$--dimensional
(hyper)--spherical      section     with      axis     along
$\mathbf{u}_{\ell_1}$, as is the  case for the qubit system,
where $\ell_1 = 4$.  In  this case, by additionally assuming
that the  family $\{ \rho_x \}$  of states is real,  that is
$S_1 T =  S_1$ where $T$ denotes the  transposition map with
respect  to some  basis, the  following Lemma  completes the
first        part        of       the        proof        of
Theorem~\ref{thm:simulability_states}.

\begin{lmm}
  \label{lmm:qubit_states}
  For  any qubit  or qutrit  family $S_0  : \R^{\ell_0}  \to
  \R^n$ of  states, with $\ell_0 =  4$ or $\ell_0 =  9$, and
  any qubit family $S_1 : \R^4 \to \R^n$ of states such that
  $S_1^+ S_1 \mathbf{u}_{\ell_1}  = \mathbf{u}_{\ell_1}$ and
  $S_1 T = S_1$, the following are equivalent:
  \begin{enumerate}
  \item      \label{item:qubit_states_majorization}     $S_0
    \mathbb{E}_0 \subseteq S_1 \mathbb{E}_1$,
  \item \label{item:qubit_states_simulability}  there exists
    CPTP map $C : \R^{\ell_0} \to \R^4$ such that $S_0 = S_1
    C$.
  \end{enumerate}
\end{lmm}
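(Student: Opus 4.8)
The plan is to leverage Lemma~\ref{lmm:states}, which already reduces range inclusion to the existence of a \emph{statistical} morphism $C$ with $S_0 = S_1 C$; the remaining work is to upgrade this $C$ (or a suitable modification of it) into a genuine CPTP map. The obstruction is exactly the one illustrated by the tetrahedral counterexample in Section~\ref{sec:simulability_states}: a positive unit-preserving map need not be completely positive, and transposition is the prototypical failure. The key structural input is the \emph{Woronowicz decomposition}~\cite{Wor76}: any positive map from a qubit (or, in the relevant low-dimensional cases, qutrit) algebra decomposes as a sum of a completely positive map and the composition of a completely positive map with transposition. Concretely, writing $C$ in the Heisenberg picture as a positive unital map $\Phi$ on $2\times 2$ (or $3\times 3$ matrices with the qubit-output constraint), Woronowicz gives $\Phi = \Phi_{\mathrm{CP}} + \Phi_{\mathrm{coCP}}\circ T$.

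First I would translate the statistical morphism $C = S_1^+ S_0$ from the vector formalism back into a linear map $\Phi$ on operators (Heisenberg picture), using the identification of $\R^4$ with the real span of Pauli operators and noting that the hypotheses $S_1^+ S_1 \mathbf{u}_{\ell_1} = \mathbf{u}_{\ell_1}$ and $S_1^+ S_1 \mathbb{E}_1 \subseteq \mathbb{E}_1$ guarantee $\Phi$ is unital and positive on the relevant effect cone. Second, I would apply the Woronowicz decomposition to $\Phi$, obtaining $\Phi = \Phi_1 + \Phi_2 \circ T$ with $\Phi_1, \Phi_2$ completely positive. Third — and this is where the hypothesis $S_1 T = S_1$ (reality of $\{\rho_x\}$) does its work — I would observe that precomposing with transposition acts trivially after pushing through $S_1$: since $S_1 T = S_1$, replacing $\Phi_2 \circ T$ by $\Phi_2$ does not change $S_1 C$, because the transposition invariance of the real family $S_1$ absorbs the transpose. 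Hence the modified map $\Phi' := \Phi_1 + \Phi_2$ is completely positive, still satisfies $S_0 = S_1 C'$, and — being a sum of CP maps whose total is unital (the unitality must be checked, possibly after rescaling, using that $\Phi_1(\openone) + \Phi_2(\openone^T) = \Phi_1(\openone)+\Phi_2(\openone) = \openone$) — is CPTP in the Schrödinger picture after dualizing.

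The main obstacle I anticipate is the bookkeeping around \emph{unitality/trace-preservation} of the reassembled map: Woronowicz's decomposition is not canonically normalized, so $\Phi_1$ and $\Phi_2$ individually need not be unital, and after dropping the transposition one must verify that $\Phi_1 + \Phi_2$ is \emph{exactly} unit-preserving rather than merely sub-unital, which requires care because $T$ fixes $\openone$ and so $\Phi(\openone) = \Phi_1(\openone) + \Phi_2(\openone) = \openone$ already; the subtlety is only that complete positivity of the sum is what we gain. A secondary point requiring attention is the qutrit case on the \emph{input} side ($\ell_0 = 9$): Woronowicz's theorem in its clean ``CP plus coCP'' form holds for maps $M_2 \to M_2$, $M_2 \to M_3$, and $M_3 \to M_2$, so the decomposition applies precisely when the output is a qubit, which matches the statement's restriction that $S_1$ is a qubit family — I would make explicit that it is the \emph{output} dimension being $2$ that licenses the use of Woronowicz, and that this is why the target can be a qutrit while the simulating family cannot. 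Finally I would note that combining this Lemma with Lemma~\ref{lmm:states} and the remark that the qubit effect cone has the required conical-spherical shape (so $S_1^+ S_1 \mathbb{E}_1 \subseteq \mathbb{E}_1$ is automatic) yields Theorem~\ref{thm:simulability_states}.
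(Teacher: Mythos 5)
Your proposal is correct and follows essentially the same route as the paper: Lemma~\ref{lmm:states} supplies a positive unit-preserving map $C'$ with $S_0 = S_1 C'$, the Woronowicz decomposition splits $C'$ into a completely positive part plus a completely positive part composed with transposition, and the hypothesis $S_1 T = S_1$ absorbs the transposition, leaving a completely positive unit-preserving map (whose unitality follows, as you note, from $T(\openone)=\openone$). The only point to tighten is the placement of the transposition: it must sit on the qubit output side of the co-CP part (the paper writes $C' = p C_0 + (1-p) T C_1$) so that it is adjacent to $S_1$ and can be absorbed, which is harmless since a co-CP map can always be rewritten with $T$ on either side of a CP map.
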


\begin{proof}
  Implication          $\ref{item:qubit_states_majorization}
  \Leftarrow     \ref{item:qubit_states_simulability}$    is
  trivial.

  Implication          $\ref{item:qubit_states_majorization}
  \Rightarrow  \ref{item:qubit_states_simulability}$ can  be
  shown  as follows.   Due to  Lemma~\ref{lmm:states}, there
  exists a statistical morphism  $C': \R^{\ell_0} \to \R^4$,
  hence a PUP  map, such that $S_0 = S_1  C'$.  Let us prove
  that there  exists a CPUP  map $C : \R^{\ell_0}  \to \R^4$
  such that $S_0 = S_1 C$.

  For   any   PUP  map   $C'   :   \R^{\ell_0}  \to   \R^4$,
  Woronowicz~\cite{Wor76} proved  that there exist $0  \le p
  \le 1$ and CPUP maps $C_0 : \R^{\ell_0} \to \R^4$ and $C_1
  : \R^{\ell_0} \to \R^4$ such that
  \begin{align}
    \label{eq:states_decomposition}
    C' = p C_0 + \left( 1 - p \right) T C_1.
  \end{align}
  One has
  \begin{align*}
    S_1 C' = & S_1 \left[ p C_0 + \left( 1 - p \right) T C_1
      \right] \\ = & S_1 \left[ p C_0 + \left( 1 - p \right)
      C_1 \right],
  \end{align*}
  where        the       equalities        follow       from
  Eq.~\eqref{eq:states_decomposition}    and     from    the
  hypothesis $S_1 T =  S_1$, respectively.  Since the convex
  combination of CPUP maps is CPUP, map $C := p C_0 + \left(
  1 - p \right) C_1$ is CPUP.
\end{proof}

Second, let us consider the case when the linear span of $\{
\rho_x   \}$  does   not  contain   the  identity   operator
$\openone$.  Since by linearity  it immediately follows that
any linear dependency  in the states $\{ \rho_x  \}$ must be
present also  in the  states $\{  \sigma_x \}$,  as formally
show in Lemma~\ref{lmm:dependencies},  the remaining part of
the   proof  directly   follows  from   the  Alberti-Uhlmann
criterion,        as         formally        shown        in
Lemma~\ref{lmm:alberti_ulhmann}.

\begin{lmm}
  \label{lmm:dependencies}
  For any families  $S_0 : \R^{\ell_0} \to \R^n$  and $S_1 :
  \R^{\ell_1}   \to  \R^n$   of   states   such  that   $S_0
  \mathbb{E}_0 \subseteq S_1 \mathbb{E}_1$,  if for some $k$
  there exists $\{ \lambda_i \in \R \}$ such that
  \begin{align*}
    s_1^k = \sum_{i \neq k} \lambda_i s_1^i,
  \end{align*}
  then one has
  \begin{align*}
    s_0^k = \sum_{i \neq k} \lambda_i s_0^i.
  \end{align*}
\end{lmm}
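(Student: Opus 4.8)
The plan is to exploit the fact that the hypothesis $S_0 \mathbb{E}_0 \subseteq S_1 \mathbb{E}_1$ forces $\rng S_0 \subseteq \rng S_1$, and then read off the claimed linear dependency as a statement about membership in a subspace of $\R^n$. First I would recall, exactly as in the proof of Lemma~\ref{lmm:states}, that since $\spn \mathbb{E}_0 = \R^{\ell_0}$ and $\spn \mathbb{E}_1 = \R^{\ell_1}$, the inclusion $S_0 \mathbb{E}_0 \subseteq S_1 \mathbb{E}_1$ implies $\rng S_0 \subseteq \rng S_1$ (the image of $S_0$, spanned by $S_0\mathbb{E}_0$, lies inside the image of $S_1$, spanned by $S_1\mathbb{E}_1$). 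Equivalently, $S_1 S_1^+$, the orthogonal projector onto $\rng S_1$, acts as the identity on $\rng S_0$, so $S_1 S_1^+ S_0 = S_0$.

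Next I would set $C := S_1^+ S_0 : \R^{\ell_0} \to \R^{\ell_1}$, so that $S_1 C = S_1 S_1^+ S_0 = S_0$ by the previous step. Writing this row by row, the $k$-th row $s_0^k$ of $S_0$ equals the $k$-th row $s_1^k$ of $S_1$ composed with $C$, i.e. $s_0^k = s_1^k C$ (viewing $s_j^k$ as a linear functional / row vector). Now if $s_1^k = \sum_{i \neq k} \lambda_i s_1^i$ as row vectors, then applying $C$ on the right gives
\begin{align*}
  s_0^k = s_1^k C = \sum_{i \neq k} \lambda_i s_1^i C = \sum_{i \neq k} \lambda_i s_0^i,
\end{align*}
which is exactly the claim.

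The only point requiring care — and the nearest thing to an obstacle — is the justification that $S_1 S_1^+ S_0 = S_0$, i.e. that $\rng S_0 \subseteq \rng S_1$ genuinely follows from the set inclusion $S_0 \mathbb{E}_0 \subseteq S_1 \mathbb{E}_1$. This rests on $\mathbb{E}_0$ and $\mathbb{E}_1$ each spanning the full ambient space $\R^{\ell_0}$, respectively $\R^{\ell_1}$ (true for the quantum state spaces under consideration, since the effect cones have nonempty interior), so that $\rng S_j = S_j(\spn \mathbb{E}_j) = \spn(S_j \mathbb{E}_j)$; the set inclusion then passes to the spans. Everything after that is elementary linear algebra — the composition $S_1 C = S_0$ is an identity of matrices, and post-composing a linear relation among the rows of $S_1$ with $C$ on the right is automatic. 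So the lemma is really just the observation that linear dependencies among the rows of $S_1$ are inherited by the rows of $S_0 = S_1 C$.
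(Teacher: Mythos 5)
Your proof is correct and rests on the same key fact as the paper's: that $\spn\mathbb{E}_0=\R^{\ell_0}$ and $\spn\mathbb{E}_1=\R^{\ell_1}$ upgrade the set inclusion $S_0\mathbb{E}_0\subseteq S_1\mathbb{E}_1$ to the range inclusion $\rng S_0\subseteq\rng S_1$, so that linear dependencies among the rows of $S_1$ (vectors in $(\rng S_1)^\perp\subseteq(\rng S_0)^\perp$) are inherited by $S_0$. The paper packages this slightly differently --- it evaluates both sides against a spanning family $\{e_0^j\}\subseteq\mathbb{E}_0$ and its matching effects $\{e_1^j\}$ rather than first factoring $S_0=S_1S_1^+S_0$ --- but the two arguments are the same in substance.
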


\begin{proof}
  By hypothesis, for any $e_0 \in \mathbb{E}_0$ there exists
  $e_1 \in \mathbb{E}_1$ such that
  \begin{align*}
    s_0^k \cdot e_0 = s_1^k \cdot e_1.
  \end{align*}
  Hence, for  any set  $\{ e_0^j \}  \subseteq \mathbb{E}_0$
  one has
  \begin{align*}
    s_0^k \cdot e_0^j = s_1^k  \cdot e_1^j = \sum_{i \neq k}
    \lambda_i s_1^i \cdot e_1^j  = \sum_{i \neq k} \lambda_i
    s_0^i \cdot e_0^j.
  \end{align*}
  Since $\spn \mathbb{E}_0 = \R^{\ell_0}$, it is possible to
  take  set  $\{  e_0^j  \in  \mathbb{E}_0  \}$  a  spanning
  set. Hence the thesis.
\end{proof}

\begin{lmm}
  \label{lmm:alberti_ulhmann}
  For any  qubit families $S_0 :  \R^4 \to \R^n$ and  $S_1 :
  \R^4   \to  \R^n$   of   states  such   that  $S_1^+   S_1
  \mathbf{u}_{\ell_1} \neq \mathbf{u}_{\ell_1}$ and $S_1 T =
  S_1$ where $T$ denotes  the transposition map with respect
  to some basis, the following are equivalent:
  \begin{enumerate}
  \item    \label{item:alberti_ulhmann_majorization}    $S_0
    \mathbb{E}_0 \subseteq S_1 \mathbb{E}_1$,
  \item    \label{item:alberti_ulhmann_simulability}   there
    exists CPTP map $C : \R^4 \to \R^4$ such that $S_0 = S_1
    C$.
  \end{enumerate}
\end{lmm}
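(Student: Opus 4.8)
The plan is to reduce Lemma~\ref{lmm:alberti_ulhmann} to the original Alberti-Uhlmann theorem for qubit dichotomies. The key observation is that the hypothesis $S_1^+ S_1 \mathbf{u}_{\ell_1} \neq \mathbf{u}_{\ell_1}$ says precisely that $\mathbf{u}_{\ell_1} = \mathbf{u}_4$ (the identity operator) is \emph{not} in the linear span of the rows of $S_1$, i.e.\ the states $\{\rho_x\}$ do not span a subspace containing $\openone$. Since $S_1 : \R^4 \to \R^n$ is built from qubit states, and combined with $\rng S_0 \subseteq \rng S_1$ (which follows from the range inclusion via $\spn \mathbb{E}_i = \R^{\ell_i}$, exactly as in the proof of Lemma~\ref{lmm:states}) together with Lemma~\ref{lmm:dependencies}, all linear dependencies among the $\rho_x$ are inherited by the $\sigma_x$. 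Hence the problem of simulating $\{\sigma_x\}$ by $\{\rho_x\}$ collapses to a problem involving only a small number of linearly independent states. Because $\openone \notin \spn\{\rho_x\}$, the span of $\{\rho_x\}$ inside the $4$-dimensional (Pauli) space has dimension at most $3$, and the first row $S_1 \mathbf{u}_4 = \mathbf{u}_n$ constraint plus reality pins it down further: the affine span of $\{\rho_x\}$ is at most a line, so effectively $\{\rho_x\}$ reduces to a dichotomy $(\rho_0,\rho_1)$ (two states on a line), with the remaining $\rho_x$ being affine combinations of these two.

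First I would make the reduction to a dichotomy explicit: choose two states, say $\rho_a$ and $\rho_b$, among the $\{\rho_x\}$ such that every $\rho_x$ is an affine combination $\rho_x = \mu_x \rho_a + (1-\mu_x)\rho_b$; such a choice exists because $\openone\notin\spn\{\rho_x\}$ forces the affine dimension of the family to be at most one (the differences $\rho_x - \rho_y$ all lie in the traceless hyperplane and, having real Bloch vectors all in a fixed direction once we remove the degrees of freedom killed by $\openone\notin\spn$, are collinear). By Lemma~\ref{lmm:dependencies} the same affine relations hold for $\{\sigma_x\}$: $\sigma_x = \mu_x \sigma_a + (1-\mu_x)\sigma_b$. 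Therefore a CPTP map $C$ satisfies $\sigma_x = C(\rho_x)$ for all $x$ if and only if it satisfies $\sigma_a = C(\rho_a)$ and $\sigma_b = C(\rho_b)$, i.e.\ iff $C$ converts the dichotomy $(\rho_a,\rho_b)$ into $(\sigma_a,\sigma_b)$.

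Next I would match the testing regions. The inclusion $\mathcal{R}(\{\sigma_x\}) \subseteq \mathcal{R}(\{\rho_x\})$ of the $n$-dimensional testing regions projects, under the linear map sending the $x$-indexed coordinates to the pair of coordinates $(a,b)$, onto the inclusion $\mathcal{R}(\sigma_a,\sigma_b) \subseteq \mathcal{R}(\rho_a,\rho_b)$ of the two-dimensional dichotomy testing regions; conversely, because the affine relations force every coordinate to be an affine function of the $a$- and $b$-coordinates, the $n$-dimensional inclusion is \emph{equivalent} to the $2$-dimensional one. Now apply the Alberti-Uhlmann theorem~\cite{AU80} to the qubit dichotomies $(\rho_a,\rho_b)$ and $(\sigma_a,\sigma_b)$: there is a CPTP map converting the former into the latter if and only if the testing region of the former contains that of the latter. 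Combining this equivalence with the dichotomy reduction of the previous paragraph yields $\ref{item:alberti_ulhmann_majorization} \Rightarrow \ref{item:alberti_ulhmann_simulability}$; the converse is immediate since a CPTP map always contracts testing regions.

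The main obstacle I expect is the first reduction step: carefully justifying that $S_1^+ S_1 \mathbf{u}_{\ell_1} \neq \mathbf{u}_{\ell_1}$, together with reality ($S_1 T = S_1$) and the qubit constraint, really does force the affine span of $\{\rho_x\}$ down to (at most) a line, rather than a higher-dimensional affine subspace; one must use that the $4$-dimensional Pauli space decomposes as $\R\mathbf{u}_4 \oplus (\text{traceless})$, that $\openone\notin\spn\{\rho_x\}$ kills the possibility of the span being all of $\R^4$ while still containing a generic affine slice, and that the reality condition confines the Bloch vectors to a $2$-plane through the origin so that, after these constraints, the only affine subspaces available through two real qubit states not affinely spanning a higher-dimensional set are lines. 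A secondary subtlety is bookkeeping with the pseudoinverse and the projector $S_1 S_1^+$ when passing between the $n$-outcome picture and the dichotomy picture, but this is the same routine linear algebra already used in Lemma~\ref{lmm:states}.
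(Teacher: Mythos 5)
Your proposal is correct and follows essentially the same route as the paper: the hypotheses $S_1^+ S_1 \mathbf{u}_{\ell_1} \neq \mathbf{u}_{\ell_1}$ and $S_1 T = S_1$ force the real qubit family to have affine dimension at most one, so every $s_1^k$ is an affine combination of a fixed pair; Lemma~\ref{lmm:dependencies} transfers these relations to $S_0$, and the Alberti--Uhlmann criterion (invoked in the paper via the Buscemi--Gour testing-region formulation of Ref.~\cite{BG17}) supplies the CPTP map for the resulting dichotomies. The only cosmetic differences are that the paper fixes the reference pair to be $(s_1^0, s_1^1)$ while you choose an affinely spanning pair, and that you spell out the Bloch-plane argument that the paper leaves implicit.
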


\begin{proof}
  Implication       $\ref{item:alberti_ulhmann_majorization}
  \Leftarrow   \ref{item:alberti_ulhmann_simulability}$   is
  trivial.

  Implication       $\ref{item:alberti_ulhmann_majorization}
  \Rightarrow  \ref{item:alberti_ulhmann_simulability}$  can
  be  shown as  follows.

  By   the  hypothesis   $S_0  \mathbb{E}_0   \subseteq  S_1
  \mathbb{E}_1$, one has that for any $e_0 \in \mathbb{E}_0$
  there  exists a  $e_1 \in  \mathbb{E}_1$ such  that $s_0^k
  \cdot e_0  = s_1^k e_1$  for any $k$. Since  in particular
  this holds  for $k =  0, 1$,  by denoting with  $S_0'$ and
  $S_1'$  the families  of  states whose  rows are  $(s_0^0,
  s_0^1)$ and $(s_1^0, s_1^1)$, respectively, one has
  \begin{align*}
    S_0' \mathbb{E}_0 \subseteq S_1' \mathbb{E}_1.
  \end{align*}
  Hence, due to a result~\cite{BG17} by Buscemi and Gour, in
  turn based on a result~\cite{AU80} by Alberti and Uhlmann,
  there exists  a CPTP  map $C  : \R^4  \to \R^4$  such that
  $S_0' = S_1' C$.
  
  Due to the hypotheses  $S_1^+ S_1 \mathbf{u}_{\ell_1} \neq
  \mathbf{u}_{\ell_1}$ and $S_1 T =  S_1$, for any $k$ there
  exists $\lambda^k \in \R$ such that
  \begin{align*}
    s_1^k =  \lambda^k s_1^0  + \left(1 -  \lambda^k \right)
    s_1^1,
  \end{align*}
  that is, state  $s_1^k$ is a convex  combination of states
  $s_1^0$      and     $s_1^1$.       Hence,     due      to
  Lemma~\ref{lmm:dependencies}, one also has
  \begin{align*}
    s_0^k =  \lambda^k s_0^0  + \left(1 -  \lambda^k \right)
    s_0^1,
  \end{align*}
  that is, state  $s_0^k$ is a convex  combination of states
  $s_0^0$ and $s_0^1$.  Hence, by linearity, $S_0 = S_1 C$.
\end{proof}

This concludes the proof of Theorem~\ref{thm:simulability_states}.

\subsection{Simulability of measurements}
\label{sec:proof_measurement}

Here  we prove  Theorem~\ref{thm:simulability_measurements},
that we report here for the reader's convenience.

\begin{thm}
  For any qubit or qutrit measurement $\{ \tau_a \}$ and any
  real qubit  measurement $\{  \pi_a \}$, the  following are
  equivalent:
  \begin{itemize}
    \item $\{ \tau_a \} \preceq \{ \pi_a \}$.
    \item   $\mathcal{R}  (   \{  \tau_a   \}  )   \subseteq
      \mathcal{R} ( \{ \pi_a \} )$.
  \end{itemize}
\end{thm}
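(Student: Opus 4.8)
The plan is to exploit the same duality between states and measurements that underlies the whole bilinear formalism: a measurement is just a family of effects, and transposition/complete positivity arguments carry over verbatim from the Schr\"odinger picture to the Heisenberg picture. Concretely, I would first translate the statement into the matrix language of Section~\ref{sec:proofs}. Let $M_0 : \R^{\ell_0} \to \R^n$ be the measurement $\{ \tau_a \}$ (qubit or qutrit, so $\ell_0 = 4$ or $9$) and $M_1 : \R^{4} \to \R^n$ the real qubit measurement $\{ \pi_a \}$, with the rows being the effect vectors. Recall $M_i^T \mathbf{u}_n = \mathbf{u}_{\ell_i}$, and that the range $\mathcal{R}(\{\pi_a\})$ is precisely the image $M_1 \mathbb{S}_1$ of the state set under $M_1$. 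So the range-inclusion hypothesis reads $M_0 \mathbb{S}_0 \subseteq M_1 \mathbb{S}_1$, and the simulability conclusion reads: there is a CPTP map $C : \R^{\ell_0} \to \R^4$ with $M_0 = M_1 C^{T\!}$, equivalently $\tau_a = C^\dagger(\pi_a)$; i.e.\ $M_0^T = C^T M_1^T$ on effect vectors.

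Next I would run the exact analog of Lemma~\ref{lmm:states}, but with $\mathbb{E}$ replaced by $\mathbb{S}$ and the roles of states and effects swapped. The trivial direction is immediate. For the nontrivial direction, set $C := (M_1^+ M_0)$ — or rather its transpose, depending on convention — and check it is a \emph{state} morphism: $C \mathbb{S}_0 = M_1^+ M_0 \mathbb{S}_0 \subseteq M_1^+ M_1 \mathbb{S}_1 \subseteq \mathbb{S}_1$, where the last inclusion is the Heisenberg-picture counterpart of the hypothesis $S_1^+ S_1 \mathbb{E}_1 \subseteq \mathbb{E}_1$ used in the states case; for a qubit measurement this holds because $\mathbb{S}_1$ is again a cone with spherical cross-section along $\mathbf{u}_4$, the Bloch ball. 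Trace preservation follows from $M_1^T \mathbf{u}_n = \mathbf{u}_{\ell_1}$: one has $C^T \mathbf{u}_{\ell_1} = M_0^T (M_1^+)^T \mathbf{u}_{\ell_1}$, and using $\rng M_0 \subseteq \rng M_1$ (which follows from $M_0 \mathbb{S}_0 \subseteq M_1 \mathbb{S}_1$ together with $\spn \mathbb{S}_i = \R^{\ell_i}$, so that $M_1 M_1^+$ projects onto $\rng M_1 \supseteq \rng M_0$) one gets $M_0 = M_1 C$ and hence $C^T \mathbf{u}_{\ell_1} = \mathbf{u}_{\ell_0}$. So $C$ is a PTP map with $M_0 = M_1 C$.

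The remaining step is promoting the PTP map to a CPTP map, and this is where the real work and the analog of Lemma~\ref{lmm:qubit_states} live. Apply the Woronowicz decomposition~\cite{Wor76} to the PTP map $C : \R^{\ell_0} \to \R^4$: there exist $0 \le p \le 1$ and CPTP maps $C_0, C_1 : \R^{\ell_0} \to \R^4$ such that $C = p\, C_0 + (1-p)\, C_1 T$, where $T$ is transposition on $\R^4$ (the image system being a qubit, Woronowicz gives exactly two "quadrants"). Then
\begin{align*}
  M_1 C = M_1 \left[ p\, C_0 + (1-p)\, C_1 T \right] = M_1 \left[ p\, C_0 + (1-p)\, C_1 \right],
\end{align*}
using the hypothesis $M_1 T = M_1$ (the measurement $\{\pi_a\}$ is real, so $\pi_a^T = \pi_a$, i.e.\ $M_1$ is transposition-invariant). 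Since a convex combination of CPTP maps is CPTP, the map $C'' := p\, C_0 + (1-p)\, C_1$ is CPTP and satisfies $M_0 = M_1 C''$, which is the desired conclusion $\tau_a = (C'')^\dagger(\pi_a)$.

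The main obstacle is getting the transposition bookkeeping right in the Heisenberg picture: one must be careful about whether $C$ or $C^T$ is the object to which Woronowicz applies, and whether "realness of $\{\pi_a\}$" translates to $M_1 T = M_1$ with $T$ acting on the input space of $C$ (namely $\R^4$, the qubit system being simulated) — it does, precisely because the real measurement sits on the image side of $C$. I expect no separate "linear-dependency" case is needed here: unlike Theorem~\ref{thm:simulability_states}, the hypothesis that $\{\pi_a\}$ is a \emph{measurement} forces $M_1^T \mathbf{u}_n = \mathbf{u}_4$, so the identity is automatically in the span of the relevant object and the analog of Lemma~\ref{lmm:states}'s side condition is always met for a qubit measurement; thus the proof is a single clean application of Lemma-\ref{lmm:states}-style pseudoinverse construction followed by Woronowicz, with no Alberti--Uhlmann fallback required.
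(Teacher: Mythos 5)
Your proposal is correct and follows essentially the same route as the paper: the pseudoinverse construction $C := M_1^+ M_0$ giving a state morphism (the paper's Lemma~\ref{lmm:measurement}), the observation that completeness forces $M_1^+ M_1 \mathbf{u}_{4} = \mathbf{u}_{4}$ so that no Alberti--Uhlmann case split is needed, and the Woronowicz decomposition combined with $M_1 T = M_1$ to upgrade the PTP map to a CPTP one (the paper's Lemma~\ref{lmm:qubit_measurement}). The only blemish is bookkeeping: the decomposition should read $C' = p\,C_0 + (1-p)\,T C_1$ with $T$ acting on the \emph{output} space $\R^4$ of $C_1$ (equivalently the input space of $M_1$), not $C_1 T$ as written, since it is the adjacency $M_1 T C_1 = M_1 C_1$ that makes the cancellation go through.
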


By adopting  the formalism  of bilinear theories,  we denote
with $M_0$  and $M_1$ the measurements  corresponding to $\{
\tau_a \}$ and $\{ \pi_a  \}$, respectively.  In contrast to
Theorem~\ref{thm:simulability_states},  for  whose proof  we
needed to distinguish two cases  based on whether the linear
span  of  $\{ \rho_x  \}$  contained  the identity  operator
$\openone$ or  not, due to completeness  for any measurement
$\{ \pi_a  \}$ one has  $\sum_a \pi_a = \openone$,  that is,
$M_1^+ M_1 \mathbf{u}_{\ell_1} = \mathbf{u}_{\ell_1}$, where
$(\cdot)^+$ denotes the Moore-Penrose pseudoinverse.  Hence,
the     proof     proceeds     along    the     lines     of
Lemmas~\ref{lmm:states} and~\ref{lmm:qubit_states}, that are
in  this   case  replaced   by  Lemmas~\ref{lmm:measurement}
and~\ref{lmm:qubit_measurement},   respectively.    In   the
following Lemma  we show  that, under the  hypothesis $M_1^+
M_1  \mathbb{S}_1 \subseteq  \mathbb{S}_1$, range  inclusion
between two measurements is equivalent to the existence of a
state morphism between them.

\begin{lmm}
  \label{lmm:measurement}
  For any measurements $M_0 : \R^{\ell_0} \to \R^n$ and $M_1
  : \R^{\ell_1} \to \R^n$  such that $M_1^+ M_1 \mathbb{S}_1
  \subseteq \mathbb{S}_1$, the following are equivalent:
  \begin{enumerate}
  \item      \label{item:measurement_majorization}      $M_0
    \mathbb{S}_0 \subseteq M_1 \mathbb{S}_1$,
  \item \label{item:measurement_simulability} there exists a
    state morphism  $C :  \R^{\ell_0} \to  \R^{\ell_1}$ such
    that $M_0 = M_1 C$.
  \end{enumerate}
\end{lmm}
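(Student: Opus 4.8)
The plan is to dualize the proof of Lemma~\ref{lmm:states} by exchanging the roles of states and effects. Implication $\ref{item:measurement_majorization} \Leftarrow \ref{item:measurement_simulability}$ is immediate: if $M_0 = M_1 C$ with $C$ a state morphism, then $M_0 \mathbb{S}_0 = M_1 C \mathbb{S}_0 \subseteq M_1 \mathbb{S}_1$.

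For the forward implication $\ref{item:measurement_majorization} \Rightarrow \ref{item:measurement_simulability}$, I would set $C := M_1^+ M_0$, with $(\cdot)^+$ the Moore-Penrose pseudoinverse, and verify the two defining properties. First, that $C$ is a state morphism: applying $M_1^+$ to both sides of the hypothesis $M_0 \mathbb{S}_0 \subseteq M_1 \mathbb{S}_1$ and then using the standing hypothesis $M_1^+ M_1 \mathbb{S}_1 \subseteq \mathbb{S}_1$ gives $C \mathbb{S}_0 = M_1^+ M_0 \mathbb{S}_0 \subseteq M_1^+ M_1 \mathbb{S}_1 \subseteq \mathbb{S}_1$. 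Note that, in contrast to the state case, the notion of state morphism carries no extra normalization constraint (the identity effect $\mathbf{u}_{\ell_1}$ is not a state), so no analogue of the condition $S_1^+ S_1 \mathbf{u}_{\ell_1} = \mathbf{u}_{\ell_1}$ appearing in Lemma~\ref{lmm:states} is needed here.

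Second, that $M_0 = M_1 C$. Multiplying $C = M_1^+ M_0$ from the left by $M_1$ gives $M_1 C = M_1 M_1^+ M_0$. Since states span the whole space, $\spn \mathbb{S}_0 = \R^{\ell_0}$ and $\spn \mathbb{S}_1 = \R^{\ell_1}$, the set inclusion $M_0 \mathbb{S}_0 \subseteq M_1 \mathbb{S}_1$ upgrades to the subspace inclusion $\rng M_0 \subseteq \rng M_1$; since $M_1 M_1^+$ is the orthogonal projector onto $\rng M_1$, it acts as the identity on $\rng M_0$, whence $M_1 M_1^+ M_0 = M_0$ and therefore $M_0 = M_1 C$.

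I expect this to be essentially routine once the state/effect duality with Lemma~\ref{lmm:states} is recognized. The one step requiring genuine care is the passage from the set inclusion $M_0 \mathbb{S}_0 \subseteq M_1 \mathbb{S}_1$ to the range inclusion $\rng M_0 \subseteq \rng M_1$, which relies on the state set spanning $\R^{\ell_0}$ and is precisely what makes $M_1^+$ behave as a left inverse of $M_1$ on the range of $M_0$; everything else is bookkeeping with the pseudoinverse and the standing hypothesis.
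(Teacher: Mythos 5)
Your proposal is correct and follows essentially the same route as the paper: the same choice $C := M_1^+ M_0$, the same chain of inclusions $C\mathbb{S}_0 \subseteq M_1^+ M_1 \mathbb{S}_1 \subseteq \mathbb{S}_1$, and the same span/projector argument for $M_0 = M_1 C$. Your remark that no normalization condition is needed (unlike in Lemma~\ref{lmm:states}) is also consistent with the paper's definition of a state morphism.
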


\begin{proof}
  Implication           $\ref{item:measurement_majorization}
  \Leftarrow     \ref{item:measurement_simulability}$     is
  trivial.

  Implication           $\ref{item:measurement_majorization}
  \Rightarrow  \ref{item:measurement_simulability}$  can  be
  shown as follows. Let
  \begin{align}
    \label{eq:measurement_map}
    C := M_1^+ M_0.
  \end{align}

  Let us first  show that map $C$ is a  state morphism.  One
  has
  \begin{align*}
    C \mathbb{S}_0 = M_1^+  M_0 \mathbb{S}_0 \subseteq M_1^+
    M_1 \mathbb{S}_1 \subseteq \mathbb{S}_1,
  \end{align*}
  where        the        equality       follows        from
  Eq.~\eqref{eq:measurement_map}  and the  inclusions follow
  from  the  hypothesis   $M_0  \mathbb{S}_0  \subseteq  M_1
  \mathbb{S}_1$  and   $M_1^+  M_1   \mathbb{S}_1  \subseteq
  \mathbb{S}_1$,  respectively. Hence  map  $C$  is a  state
  morphism.

  Let  us now  show  that  $M_0 =  M_1  C$.  By  multiplying
  Eq.~\eqref{eq:measurement_map} from the  left by $M_1$ one
  has
  \begin{align*}
    M_1 C = M_1 M_1^+ M_0.
  \end{align*}
  Since   $\spn  \mathbb{S}_0   =  \R^{\ell_0}$   and  $\spn
  \mathbb{S}_1  =   \R^{\ell_1}$,  from   $M_0  \mathbb{S}_0
  \subseteq M_1  \mathbb{S}_1$ one  has $\rng  M_0 \subseteq
  \rng M_1$.   Since $M_1 M_1^+$  is the projector  on $\rng
  M_1$, one has $M_0 = M_1 C$.
\end{proof}

It  is   easy  to  see   that  the  hypothesis   $M_1^+  M_1
\mathbb{S}_1        \subseteq        \mathbb{S}_1$        in
Lemma~\ref{lmm:measurement} is satisfied for any measurement
$M_1$    if    and    only   if    $\mathbb{S}_1$    is    a
$(\ell-1)$--dimensional  (hyper)--sphere  with center  along
$\mathbf{u}_{\ell_1}$, as is the  case for the qubit system,
where  $\ell_1   =  4$.

To see  this, notice  that the  (hyper)--sphere is  the only
body for which  there exists a point (the  center) such that
any line through  the point is orthogonal to  the surface of
the body.  Hence, the (hyper)--sphere  is also the only body
for  which  the  projection  of the  body  on  any  subspace
containing such  a point is  a subset of the  body. Finally,
notice that  by multiplying condition $M_1^T  \mathbf{u}_n =
\mathbf{u}_{\ell_1}$ on  the left  by $M_1^+ M_1$  and using
the  elementary property  of pseudoinverse  that $M_1^+  M_1
M_1^T = M_1^T$, one immediately has
\begin{align*}
  M_1^+ M_1 \mathbf{u}_{\ell_1} = \mathbf{u}_{\ell_1},
\end{align*}
that is,  $M_1^+ M_1$  is the projector  on a  subspace that
contains the center of the (hyper)--sphere $\mathbb{S}_1$.

In this case, by  additionally assuming that the measurement
$\{  \pi_a \}$  is real,  that is  $M_1 T  = M_1$  where $T$
denotes the  transposition map  with respect to  some basis,
the    following    Lemma    completes    the    proof    of
Theorem~\ref{thm:simulability_measurements}.

\begin{lmm}
  \label{lmm:qubit_measurement}
  For any qubit or qutrit measurement $M_0 : \R^{\ell_0} \to
  \R^n$, with  $\ell_0 = 4$ or  $\ell_0 = 9$, and  any qubit
  measurement $M_1 : \R^4 \to \R^n$  such that $M_1 T = M_1$
  where $T$  denotes the  transposition map with  respect to
  some basis, the following are equivalent:
  \begin{enumerate}
  \item   \label{item:qubit_measurement_majorization}   $M_0
    \mathbb{S}_0 \subseteq M_1 \mathbb{S}_1$,
  \item   \label{item:qubit_measurement_simulability}  there
    exists CPTP  map $C  : \R^{\ell_0}  \to \R^4$  such that
    $M_0 = M_1 C$.
  \end{enumerate}
\end{lmm}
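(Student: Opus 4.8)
The plan is to follow the exact blueprint that was used to prove the analogous statement for families of states in Lemma~\ref{lmm:qubit_states}, simply transporting it from the Schr\"odinger to the Heisenberg picture. The implication $\ref{item:qubit_measurement_majorization} \Leftarrow \ref{item:qubit_measurement_simulability}$ is trivial: if $M_0 = M_1 C$ for a CPTP map $C$, then in the Heisenberg picture $M_0 \mathbb{S}_0 = M_1 C \mathbb{S}_0 \subseteq M_1 \mathbb{S}_1$ since a CPTP (hence PTP) map sends states to states. So the content is in the forward direction.

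First I would invoke Lemma~\ref{lmm:measurement}. The preceding discussion has established that for a qubit measurement $M_1 : \R^4 \to \R^n$ the hypothesis $M_1^+ M_1 \mathbb{S}_1 \subseteq \mathbb{S}_1$ holds automatically, because $\mathbb{S}_1$ is a $3$-sphere centered along $\mathbf{u}_{\ell_1}$ and $M_1^+ M_1$ projects onto a subspace through that center. Hence from $M_0 \mathbb{S}_0 \subseteq M_1 \mathbb{S}_1$ Lemma~\ref{lmm:measurement} yields a state morphism, i.e.\ a PTP map, $C' : \R^{\ell_0} \to \R^4$ with $M_0 = M_1 C'$. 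The task is then to upgrade $C'$ from PTP to CPTP without destroying the identity $M_0 = M_1 C'$.

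Next I would apply the Woronowicz decomposition~\cite{Wor76}, exactly as in Eq.~\eqref{eq:states_decomposition}, now to the PTP map $C'$: there exist $0 \le p \le 1$ and CPTP maps $C_0, C_1 : \R^{\ell_0} \to \R^4$ such that $C' = p C_0 + (1-p) C_1 T$, where $T$ is the transposition map. (Care is needed about whether the transposition sits on the left or the right compared to the measurement case; in the Heisenberg picture the composition order is reversed, so I would write $C' = p C_0 + (1-p) C_1 T$ and check the placement against the convention in the paper. The essential point is that $T$ appears adjacent to the input side on which $M_1$ does not act, or can be moved there.) Then
\begin{align*}
  M_1 C' = M_1 \left[ p C_0 + (1-p) C_1 T \right] = M_1 \left[ p C_0 + (1-p) C_1 \right],
\end{align*}
where the second equality uses the reality hypothesis $M_1 T = M_1$ (absorbing the stray transposition), so that $C := p C_0 + (1-p) C_1$ is a convex combination of CPTP maps, hence CPTP, and satisfies $M_0 = M_1 C' = M_1 C$.

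**The main obstacle** I expect is purely bookkeeping: getting the transposition on the correct side in the Woronowicz decomposition so that the hypothesis $M_1 T = M_1$ actually cancels it. In the states case the decomposition is $C' = p C_0 + (1-p) T C_1$ with $T$ on the \emph{output} side, and $S_1 T = S_1$ does the cancelling; here, because measurements compose in the opposite order, one must verify that Woronowicz's theorem delivers the transposition adjacent to the relevant factor, or alternatively use that $T$ is an involution and commutes appropriately with the relevant projectors to relocate it. Once the placement is confirmed, the argument is a verbatim transcription of the proof of Lemma~\ref{lmm:qubit_states}.
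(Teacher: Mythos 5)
Your proposal follows the paper's proof exactly: invoke Lemma~\ref{lmm:measurement} (whose hypothesis $M_1^+ M_1 \mathbb{S}_1 \subseteq \mathbb{S}_1$ holds automatically for a qubit $M_1$, since $\mathbb{S}_1$ is a sphere centered along $\mathbf{u}_{\ell_1}$ and $M_1^+ M_1$ projects onto a subspace through that center) to obtain a PTP map $C'$ with $M_0 = M_1 C'$, then use the Woronowicz decomposition together with $M_1 T = M_1$ to replace $C'$ by a CPTP map. The only point left open is the one you flagged yourself, and it resolves in the simplest way: in this formalism there is no reversal of composition order between the two cases, because a measurement, like a family of states, is represented as an $n \times \ell$ matrix acting from the left on the map $C$ (both identities read $S_0 = S_1 C$ and $M_0 = M_1 C$). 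Hence the decomposition is used in exactly the same form as Eq.~\eqref{eq:states_decomposition}, namely $C' = p\, C_0 + (1-p)\, T C_1$ with the transposition on the output side of $C_1$, adjacent to $M_1$, so that the cancellation is $M_1 T C_1 = (M_1 T) C_1 = M_1 C_1$. As literally written in your display, with $C_1 T$, the second equality would \emph{not} follow from $M_1 T = M_1$, since $T$ need not commute with $C_1$; the correct placement is $T$ on the left, verbatim as in Lemma~\ref{lmm:qubit_states}. With that single correction your argument coincides with the paper's proof.
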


\begin{proof}
  Implication     $\ref{item:qubit_measurement_majorization}
  \Leftarrow  \ref{item:qubit_measurement_simulability}$  is
  trivial.

  Implication     $\ref{item:qubit_measurement_majorization}
  \Rightarrow \ref{item:qubit_measurement_simulability}$ can
  be shown as  follows.  Due to Lemma~\ref{lmm:measurement},
  there exists a state  morphism $C': \R^{\ell_0} \to \R^4$,
  hence a PTP  map, such that $M_0 = M_1  C'$.  Let us prove
  that there  exists a CPTP  map $C : \R^{\ell_0}  \to \R^4$
  such that $M_0 = M_1 C$.

  For  any  PTP  map  $C' :  \R^{\ell_0}  \to  \R^4$,  there
  exists~\cite{Wor76} $0 \le  p \le 1$ and CPTP  maps $C_0 :
  \R^{\ell_0} \to \R^4$ and $C_1: \R^{\ell_0} \to \R^4$ such
  that
  \begin{align}
    \label{eq:decomposition}
    C' = p C_0 + \left( 1 - p \right) T C_1.
  \end{align}
  One has
  \begin{align*}
    M_1 C' = & M_1 \left[ p C_0 + \left( 1 - p \right) T C_1
      \right] \\ = & M_1 \left[ p C_0 + \left( 1 - p \right)
      C_1 \right],
  \end{align*}
  where        the       equalities        follow       from
  Eq.~\eqref{eq:decomposition} and from  the hypothesis $M_1
  T =  M_1$, respectively.  Since the  convex combination of
  CPTP maps is CPTP, map $C :=  p C_0 + \left( 1 - p \right)
  C_1$ is CPTP.
\end{proof}

This         concludes          the         proof         of
Theorem~\ref{thm:simulability_measurements}.

\section{Conclusion}
\label{sec:conclusion}

In   this  work   we  addressed   the  problem   of  quantum
simulability, that  is, the  existence of a  quantum channel
transforming a given device into another.  We considered the
cases of families of $n$ qubit or qutrit states and of qubit
or qutrit measurements with $n$ elements, thus extending the
Alberti-Uhlmann criterion  for qubit dichotomies.   Based on
these results, we demonstrated the possibility of certifying
the simulability in a  semi-device independent way, that is,
without any  assumption of the devices  except their Hilbert
space dimension.

\section*{Acknowledgement}

M.D.  is  grateful to A.   Bisio, A.  Jen\v{c}ov\'a,  and K.
Matsumoto for insightful discussions. This work is supported
by  the MEXT  Quantum  Leap Flagship  Program (MEXT  Q-LEAP)
Grant No.  JPMXS0118067285; the National Research Foundation
and the Ministry of Education, Singapore, under the Research
Centres of  Excellence programme; the Japan  Society for the
Promotion of  Science (JSPS)  KAKENHI, Grant  No.  19H04066;
the program for FRIAS-Nagoya IAR Joint Project Group.

\end{document}